\newif\ifICALP\ICALPtrue
\newtheorem{theorem}{Theorem}
\newtheorem{corollary}[theorem]{Corollary}
\newtheorem{definition}[theorem]{Definition}
\newtheorem{proposition}[theorem]{Proposition}
\newcommand{\myparagraph}[1]{{\vspace*{2pt}\noindent\bf{#1}~~}}
\def\D{{\cal D}}
\long\def\ignore#1{}
\def\myps[#1]#2{\includegraphics[#1]{#2}}
\def\etal{{\em et al.}}
\def\br(#1,#2){{\langle #1,#2 \rangle}}
\def\setZ[#1,#2]{{[ #1 .. #2 ]}}
\def\q={\quad=\quad}
\def\qq={\qquad=\qquad}
\def\calP{{\cal P}}
\def\calR{{\cal R}}
\def\psfile[#1]#2{}
\def\psfilehere[#1]#2{}
\def\assign(#1,#2){\langle#1,#2\rangle}
\def\edge(#1,#2){(#1,#2)}
\def\slack(#1){\texttt{slack}({#1})}
\def\barslack(#1){\overline{\texttt{slack}}({#1})}
\def\unitvec(#1){{{\bf u}_{#1}}}
\def\sign{\texttt{sign~\!\!\!}}
\def\INWARD{\texttt{INWARD}}
\def\OUTWARD{\texttt{OUTWARD}}
\def\NEIB{\texttt{NEIB}}
\newcommand{\bx}{\mbox{\boldmath $x$}}
\newcommand{\by}{\mbox{\boldmath $y$}}
\newcommand{\bz}{\mbox{\boldmath $z$}}
\newcommand{\bX}{\mbox{\boldmath $X$}}
\newcommand{\bY}{\mbox{\boldmath $Y$}}
\newcommand{\bu}{\mbox{\boldmath $u$}}
\newcommand{\bv}{\mbox{\boldmath $v$}}
\begin{document}
%
%
%
%
%
%
\title{Submodularity on a tree: \\ Unifying $L^\natural$-convex and bisubmodular functions}
%
%
\author{Vladimir Kolmogorov \\ ~ \\ \normalsize University College London \\ {\normalsize\tt v.kolmogorov@cs.ucl.ac.uk} }
\date{}
%
%
%

\maketitle              

\begin{abstract}
We introduce a new class of functions that can be minimized in polynomial time in the value oracle model.
These are functions $f$ satisfying $f(\bx)+f(\by)\ge f(\bx \sqcap \by)+f(\bx \sqcup \by)$
where the domain of each variable $x_i$ corresponds to nodes of a rooted binary tree, and 
operations $\sqcap,\sqcup$ are defined with respect to this tree.
Special cases include previously studied $L^\natural$-convex and bisubmodular functions,
which can be obtained with particular choices of trees.
We present a polynomial-time algorithm for minimizing functions in the new class.
It combines Murota's steepest descent algorithm for $L^\natural$-convex functions
with bisubmodular minimization algorithms.

\end{abstract}
\section{Introduction}\label{sec:intro}
Let $f:\D\rightarrow \mathbb R$ be a function of $n$ variables $\bx=(x_1,\ldots,x_n)$ where $x_i\in D_i$;
thus $\D=D_1\times\ldots\times D_n$. We call elements of $D_i$ {\em labels}, and the argument of $f$ a {\em labeling}. 
Denote $V=\{1,\ldots,n\}$ to be the set of nodes. 
We will consider functions $f$ satisfying
\begin{equation}
f(\bx)+f(\by)\ge f(\bx \sqcap \by) + f(\bx \sqcup \by) \qquad \forall \bx,\by\in \D
\label{eq:submodularity}
\end{equation}
where binary operations $\sqcap,\sqcup:\D\times \D\rightarrow \D$ (expressed component-wise via operations $\sqcap,\sqcup:D_i\times D_i\rightarrow D_i$)
are defined below.

There are several known cases in which function $f$ can be minimized in polynomial time in the value oracle model.
The following two cases will be of particular relevance:
\begin{itemize}
\item {\em $L^\natural$-convex functions\footnote{Pronounced as ``L-natural convex''.}}: $D_i=\{0,1,\ldots,K_i\}$ where $K_i\ge 0$ is integer, 
$a\sqcap b=\lfloor \frac{a+b}{2} \rfloor$,
$a\sqcup b=\lceil \frac{a+b}{2} \rceil$. Property~\eqref{eq:submodularity} is then called {\em discrete midpoint convexity}~\cite{Murota:book}.
\item {\em Bisubmodular functions}: $D_i=\{-1,0,+1\}$, $a\sqcup b=\sign(a+b)$, 
$a\sqcap b=|ab|\sign(a+b)$.
\end{itemize}

In this paper we introduce a new class of functions which includes the two classes above as special cases.
We assume that labels in each set $D_i$ are nodes of a tree $T_i$ with a designated root $r_i\in D_i$.
Define a partial order $\preceq$ on $D_i$ as follows: $a\preceq b$ if $a$ is an ancestor of $b$,
i.e.\ $a$ lies on the path from $b$ to $r_i$ ($a,b\in D_i$). 
For two labels $a,b\in D_i$ let $\calP[a\rightarrow b]$ be unique path from $a$ to $b$ in $T_i$,
$\rho(a,b)$ be the number of edges in this path, and $\calP[a\rightarrow b,d]$ for integer $d\ge 0$ be the $d$-th node of this path
so that $\calP[a\rightarrow b,0]=a$ and $\calP[a\rightarrow b,\rho(a,b)]=b$. If $d>\rho(a,b)$ then we set by definition $\calP[a\rightarrow b,d]=b$.

With this notation, we can now define $a\sqcap b$, $a\sqcup b$
as the unique pair of labels satisfying the following two conditions:
(1) $\{a\sqcap b,a\sqcup b\}=\{\calP[a\rightarrow b,\lfloor\frac{d}{2}\rfloor],\calP[a\rightarrow b,\lceil\frac{d}{2}\rceil]\}$
where $d=\rho(a,b)$, and (2) $a\sqcap b\preceq a\sqcup b$ (Figure~\ref{fig:trees}(a)).
We call functions $f$ satisfying condition~\eqref{eq:submodularity} with such choice of $(\D,\sqcap,\sqcup)$ {\em strongly tree-submodular}.
Clearly, if each $T_i$ is a chain with nodes $0,1,\ldots,K$ and $0$ being the root (Figure~\ref{fig:trees}(b)) then strong tree-submodularity is equivalent
to $L^\natural$-convexity. Furthermore, if each $T_i$ is the tree shown in Figure~\ref{fig:trees}(c) then strong tree-submodularity is equivalent to bisubmodularity.

The main result of this paper is the following
\begin{theorem}
If each tree $T_i$ is binary, i.e.\ each node has at most two children, then a strongly tree-submodular function $f$
can be minimized in time polynomial in $n$ and $\max_i |D_i|$.
\label{th:polynomiality}
\end{theorem}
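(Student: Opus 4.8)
The plan is to generalize Murota's steepest-descent algorithm, replacing its two unit directions by two families of moves adapted to the trees. At a current labeling $\bx$ I consider \emph{upward} moves, which send the coordinates in some subset $S\subseteq V$ to their parents and leave the rest fixed, and \emph{downward} moves, which send each coordinate either to itself or to one of its (at most two) children. The algorithm computes a best upward move and a best downward move, performs one that strictly decreases $f$, and stops when neither does. This is where the binary hypothesis is essential: at a node with at most two children a downward step chooses among the three options $\{\text{stay},\text{left child},\text{right child}\}$, mirroring exactly the three bisubmodular labels $\{-1,0,+1\}$, whereas three or more children would call for a $k$-ary structure with no comparable tractability.

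First I would verify that each of the two subproblems is polynomially solvable. For upward moves, put $\tilde f(S)=f(\bx^S)$, where $\bx^S$ raises the coordinates of $S$ to their parents. A short computation of the componentwise operations gives $\bx^S\sqcap\bx^{S'}=\bx^{S\cup S'}$ and $\bx^S\sqcup\bx^{S'}=\bx^{S\cap S'}$ (a coordinate on a single parent-edge contributes its parent to the meet and itself to the join), so \eqref{eq:submodularity} becomes ordinary submodularity of $\tilde f$, minimizable in polynomial time. For downward moves, identify the two children of a node with $+1$ and $-1$ and ``stay'' with $0$; the same kind of check shows that on this neighborhood the tree operations $\sqcap,\sqcup$ agree componentwise with the bisubmodular operations $a\sqcup b=\sign(a+b)$ and $a\sqcap b=|ab|\sign(a+b)$ (for instance the two children meet and join at their common parent, matching $(+1)\sqcap(-1)=(+1)\sqcup(-1)=0$). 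Hence the downward auxiliary function is bisubmodular and can be minimized by a known bisubmodular minimization algorithm.

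The correctness core is a local-to-global optimality statement: if no upward or downward move decreases $f$ then $\bx$ is a global minimizer. The clean case is when a nearest minimizer $\bx^\ast$ satisfies $\rho(x_i,x^\ast_i)\le 1$ for all $i$. Then each differing coordinate is either a parent of $x_i$ (put $i\in U$) or a child (put $i\in D$), and a direct evaluation shows that $\bx\sqcap\bx^\ast$ equals the pure upward move on $U$ while $\bx\sqcup\bx^\ast$ equals the pure downward move on $D$. Applying \eqref{eq:submodularity} to $\bx,\bx^\ast$ gives $f(\bx\sqcap\bx^\ast)+f(\bx\sqcup\bx^\ast)\le f(\bx)+f(\bx^\ast)$; local optimality forces $f(\bx\sqcap\bx^\ast),f(\bx\sqcup\bx^\ast)\ge f(\bx)$, while optimality of $\bx^\ast$ forces them $\ge f(\bx^\ast)$, and together these collapse to $f(\bx)=f(\bx^\ast)$. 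Thus once the nearest minimizer is within one step, local optimality implies global optimality, and the upward/downward split exactly matches the submodular/bisubmodular subproblems above.

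What remains, and what I expect to be the main obstacle, is controlling the general position and the number of iterations. A single application of \eqref{eq:submodularity} to $\bx$ and a distant $\bx^\ast$ only produces the \emph{midpoints} $\bx\sqcap\bx^\ast,\bx\sqcup\bx^\ast$, not single steps, so the single-step moves toward a far optimum are not directly certified; bridging this gap requires a proximity theorem in the style of the $L^\natural$-convex and bisubmodular analyses, showing that the \emph{steepest} upward or downward move strictly decreases the tree distance $\sum_i\rho(x_i,x^\ast_i)$ to the nearest minimizer while not increasing $f$. Establishing such a proximity bound---reconciling simultaneous upward and downward progress on the trees---yields a running time polynomial in the tree height, hence in $\max_i|D_i|$, and is the delicate part; by contrast the two reductions and the one-step optimality lemma are comparatively routine once the neighborhood is fixed.
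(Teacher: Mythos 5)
Your setup is the same as the paper's: the two neighborhoods you describe are exactly $\INWARD(\bx)$ and $\OUTWARD(\bx)$, your verification that the upward subproblem is submodular and the downward subproblem is bisubmodular (using the binary hypothesis) is correct, and your ``clean case'' computation is precisely the first half of the paper's Proposition~\ref{prop:local}. However, you explicitly leave open the two load-bearing steps, so this is an outline rather than a proof. The first gap is smaller than you think. To get local-to-global optimality when the nearest minimizer is far, you do not need a proximity theorem: take a labeling $\by$ with $f(\by)<f(\bx)$ at minimal $l_\infty$-distance $\rho(\bx,\by)\ge 2$; then both $\bx\sqcap\by$ and $\bx\sqcup\by$ are strictly closer to $\bx$, and \eqref{eq:submodularity} together with $f(\by)<f(\bx)$ forces one of them to have cost below $f(\bx)$, contradicting minimality. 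The midpoints are not an obstacle to be bridged --- they \emph{are} the induction step. (You do still need that optimality over $\INWARD(\bx)\cup\OUTWARD(\bx)$ implies optimality over the full distance-one ball, which allows mixed up/down moves; your $U$/$D$ decomposition handles this, as does the paper's restriction to the two-label domain $\prod_i\{x_i,x^\ast_i\}$, which is $L^\natural$-convex.)

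The second gap is the real one: you posit a proximity theorem for interleaved steepest upward/downward moves on trees (``strictly decreases $\sum_i\rho(x_i,x^\ast_i)$ while not increasing $f$'') but do not prove it, and proving it directly would essentially require redeveloping translation submodularity for trees (the paper's Section~\ref{sec:sw}), which is the most technical part of the paper. The paper sidesteps this entirely by \emph{fixing the order of the phases}: run inward steps until they stall, then run outward steps until they stall. Every iterate of phase one is an ancestor of the starting point $\bx^\circ$, so the whole phase lives in $\D^-[\bx^\circ]=\{\by\:|\:\by\preceq\bx^\circ\}$, a product of root-to-leaf chains on which $f$ restricts to an $L^\natural$-convex function; similarly every iterate of phase two is an ancestor of the terminal labeling $\bz$, so phase two lives in the chain product $\D^-[\bz]$. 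The known Kolmogorov--Shioura analysis of $L^\natural$-convex steepest descent (Theorem~\ref{th:KS}) then gives at most $K$ iterations per phase and, crucially, shows that outward steps preserve the inward-optimality certificate, so termination yields both hypotheses of Proposition~\ref{prop:local}. Without either this ordering trick or a genuinely new proximity argument, your proposal does not yet establish a polynomial bound on the number of oracle calls.
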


\begin{figure*}[!t]
\begin{center}
\begin{tabular}{c@{\hspace{50pt}}c}
\includegraphics[scale=0.32]{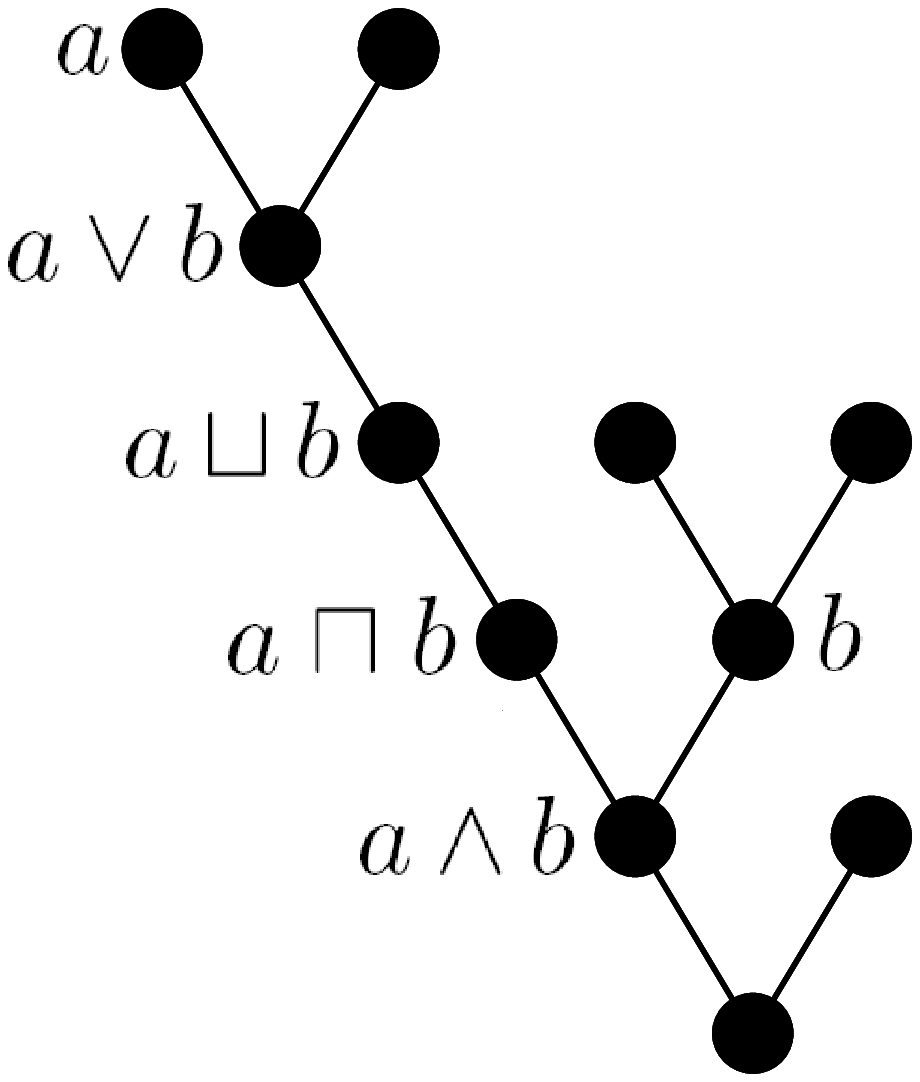} & \includegraphics[scale=0.32]{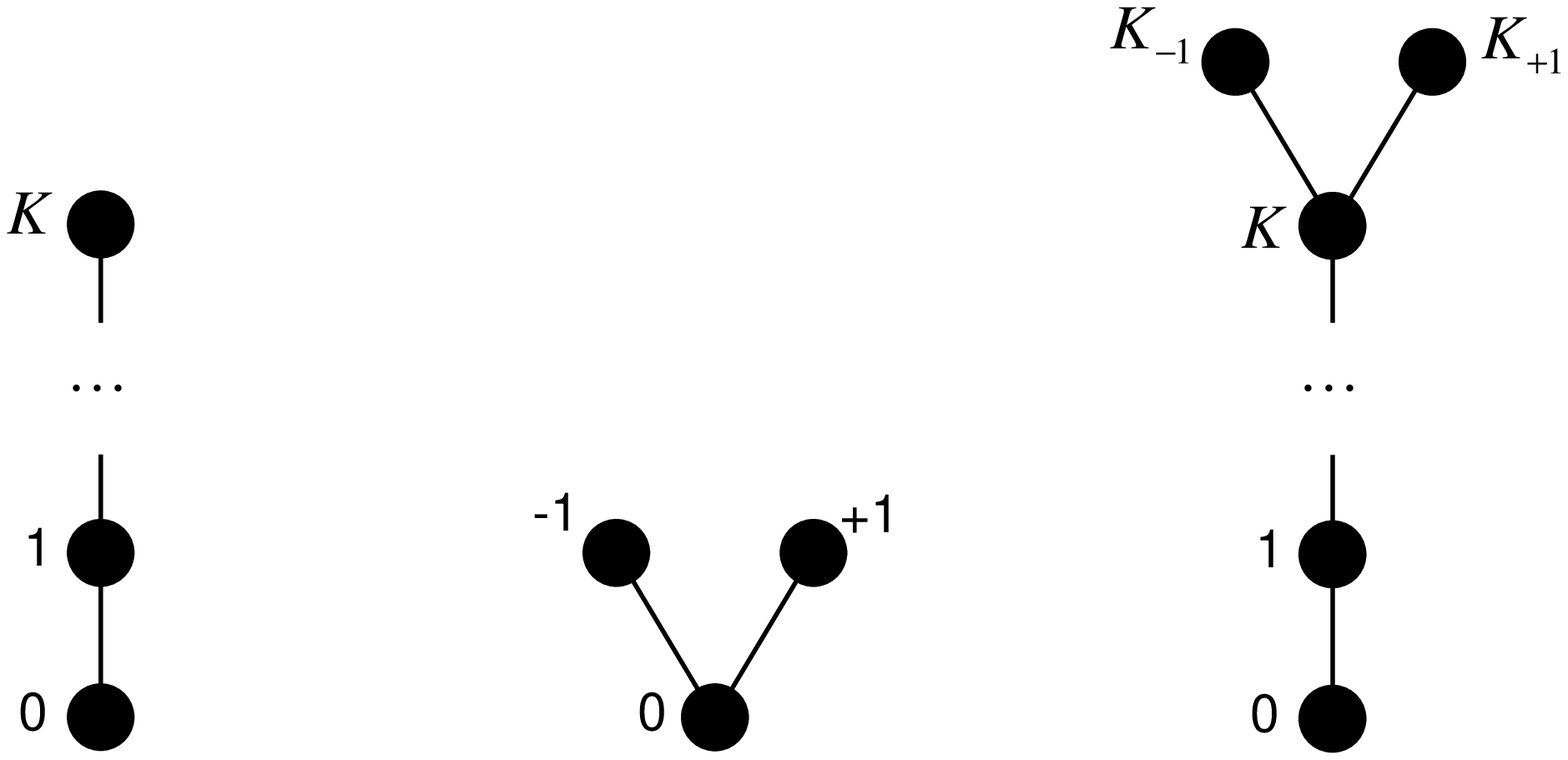} \vspace{-10pt} \\ 
\hspace{30pt} (a) & (b) \hspace{50pt} (c) \hspace{50pt} (d) \hspace{28pt} \vspace{-10pt}
\end{tabular}
\end{center}
\caption{{\bf Examples of trees. Roots are always at the bottom.} (a) Illustration of the definition of $a\sqcap b$, $a\sqcup b$, $a\wedge b$ and $a\vee b$. 
(b) A tree for $L^\natural$-convex functions. (c) A tree for bisubmodular functions. (d) A tree for which a weakly tree-submodular function can be minimized efficiently (see section~\ref{sec:weak}).
}
\label{fig:trees}
\end{figure*}

\myparagraph{Weak tree-submodularity} We will also study alternative operations on trees, which we denote as $\wedge$ and $\vee$.
For labels $a,b\in D_i$ we define $a\wedge b$ as their highest common ancestor, i.e.\ the unique node on the path $\calP[a\rightarrow b]$
which is an ancestor of both $a$ and $b$. The label $a\vee b$ is defined as the unique label on the path $\calP[a\rightarrow b]$
such that the distance between $a$ and $a\vee b$ is the same as the distance between $a\wedge b$ and $b$ (Figure~\ref{fig:trees}(a)).

We say that function $f$ is {\em weakly tree-submodular} if it satisfies
\begin{equation}
f(\bx)+f(\by)\ge f(\bx \wedge \by) + f(\bx \vee \by) \qquad \forall \bx,\by\in \D
\label{eq:submodularity:weak}
\end{equation}
We will show that strong tree-submodularity~\eqref{eq:submodularity} implies weak tree-submodularity \eqref{eq:submodularity:weak},
which justifies the terminology. If all trees are chains shown in Figure~\ref{fig:trees}(b) ($D_i=\{0,1,\ldots,K\}$ with $0$ being the root)
then $\wedge$ and $\vee$ correspond to the standard operations ``meet'' and ''join'' ($\min$ and $\max$) on an integer lattice.
It is well-known that in this case weakly tree-submodular functions can be minimized in time polynomial in $n$ and $K$~\cite{Topkis:78,Murota:book}.
In section~\ref{sec:weak} we give a slight generalization of this result; namely, we allow trees shown in Figure~\ref{fig:trees}(d).

\subsection{Related work} 
Studying operations $\langle\sqcap,\sqcup\rangle$ that give rise to tractable optimization problems received a considerable attention in the literature. Some known examples of
such operations are reviewed below. For simplicity, we assume that domains $D_i$ (and operations $\langle\sqcap,\sqcup\rangle$) are the same for all nodes:
$D_i=D$ for some finite set $D$.

\myparagraph{Submodular functions on lattices} 
The first example that we mention is the case when $D$ is a distributive lattice and $\sqcap,\sqcup$ 
are the meet and joint operations on this lattice. Functions that satisfy~\eqref{eq:submodularity} for this choice of $D$ and $\sqcap,\sqcup$ 
are called {\em submodular functions} on $D$; it is well-known that they can be minimized in strongly polynomial time~\cite{Grotschel:88,Schrijver:00,Iwata:01}.

Recently, researchers considered submodular functions on non-distributive lattices.
It is known that a lattice is non-distributive if it contains as a sublattice either the pentagon ${\cal N}_5$
or the diamond ${\cal M}_3$.
Krokhin and Larose~\cite{KrokhinLarose:08} proved tractability for the pentagon case, using nested applications of a submodular minimization algorithm.
The case of the diamond was considered by Kuivinen~\cite{Kuivinen:TR}, who proved pseudo-polynomiality of the problem.
The case of general non-distributive lattices is still open.

\myparagraph{$L^\natural$-convex functions}
The concept of $L^\natural$-convexity was introduced by Fujishige and Murota~\cite{FujishigeMurota:00} as a variant of $L$-convexity by Murota~\cite{Murota:98}. $L^\natural$-convexity is equivalent to the combination of submodularity and integral 
convexity~\cite{FavatiTardella:90} (see \cite{Murota:book} for details).

The fastest known algorithm for minimizing $L^\natural$-convex functions is the {\em steepest descent} algorithm
of Murota~\cite{Murota:IEICE00,Murota:book,Murota:SIAM03}. 
Murota proved in~\cite{Murota:SIAM03} that algorithm's complexity is $O(n\min\{K,n\log K\}\cdot {\tt SFM(n)})$
where $K=\max_i |D_i|$ and ${\tt SFM}(n)$ is the complexity of a submodular minimization algorithm for a function with $n$ variables.
The analysis of Kolmogorov and Shioura~\cite{KS:09} improved the bound to $O(\min\{K,n\log K\}\cdot {\tt SFM(n)})$.
In section~\ref{sec:alg} we review Murota's algorithm (or rather its version without scaling that has complexity $O(K\cdot {\tt SFM}(n))$.)

Note, the class of $L^\natural$-convex functions is a subclass of submodular functions on a totally ordered set $D=\{0,1,\ldots,K\}$.

\myparagraph{Bisubmodular functions}
Bisubmodular functions were introduced by Chandrasekaran and Kabadi
as rank functions of {\em (poly-)pseudomatroids}~\cite{Chandrasekaran:88,Kabadi:90}.
Independently, Bouchet~\cite{Bouchet:87} introduced the concept of $\Delta$-matroids which is equivalent
to pseudomatroids.
Bisubmodular functions and their generalizations have also been considered by Qi~\cite{Qi:88},
Nakamura~\cite{Nakamura:88}, Bouchet and Cunningham~\cite{Bouchet:95} and Fujishige~\cite{Fujishige:91}.

It has been shown that some submodular minimization algorithms can be generalized to bisubmodular functions.
Qi~\cite{Qi:88} showed the applicability of the ellipsoid method.
Fujishige and Iwata~\cite{Fujishige:06} developed a weakly polynomial combinatorial algorithm for minimizing bisubmodular functions with
complexity $O(n^5 {\tt EO} \log M)$ where ${\tt EO}$ is the number of calls to the evaluation oracle and $M$ is an upper bound on function values.
McCormick and Fujishige~\cite{McCormick:10} presented a strongly combinatorial version with
complexity $O(n^7 {\tt EO} \log n)$, as well as a $O(n^9 {\tt EO} \log^2 n)$ fully combinatorial variant that does not use divisions.
The algorithms in \cite{McCormick:10} can also be applied for minimizing a bisubmodular
function over a {\em signed ring family}, i.e.\ a subset $\calR\subseteq \D$ closed under $\sqcap$ and $\sqcup$.

\myparagraph{Valued constraint satisfaction and multimorphisms}
Our paper also fits into the framework of {\em Valued Constraint Satisfaction Problems} (VCSPs)~\cite{Cohen:AI06}.
In this framework we are given a {\em language} $\Gamma$, i.e.\ a set of cost functions $f:D^m\rightarrow\mathbb R_+\cup\{+\infty\}$
where $D$ is a fixed discrete domain and $f$ is a function of arity $m$ (different functions $f\in\Gamma$ may have different arities).
A {\em $\Gamma$-instance} is any function $f:D^n\rightarrow \mathbb R_+\cup\{+\infty\}$ that can be expressed as a finite sum of functions from $\Gamma$:
$$
f(x_1,\ldots,x_n)=\sum_{t\in T} f_t(x_{i(t,1)},\ldots,x_{i(t,m_t)})
$$
where $T$ is a finite set of terms, $f_t\in \Gamma$ is a function of arity $m_t$, and $i(t,k)$ are indexes in $\{1,\ldots,n\}$.
A finite language $\Gamma$ is called {\em tractable} if any $\Gamma$-instance can be minimized in polynomial time,
and {\em NP-hard} if this minimization problem is NP-hard. These definitions are extended to infinite languages $\Gamma$ 
as follows: $\Gamma$ is called tractable if any finite subset $\Gamma'\subset\Gamma$ is tractable,
and NP-hard if there exists a finite subset $\Gamma'\subset\Gamma$ which is NP-hard.

Classifying the complexity of different languages has been an active research area.
A major open question in this line of research is the
\emph{Dichotomy Conjecture} of Feder and Vardi (formulated for the {\em crisp} case), which states that every
constraint language is either tractable or NP-hard~\cite{Feder98:monotone}.
So far such dichotomy results have been obtained for some special cases, as described below.

A significant progress has been made in the {\bf crisp} case, i.e.\ when $\Gamma$ only contains functions $f:D^m\rightarrow \{0,+\infty\}$.
The problem is then called {\em Constraint Satisfaction} (CSP). The dichotomy is known to hold 
for languages with a 2-element domain (Schaefer~\cite{Schaefer78:complexity}),
languages with a 3-element domain (Bulatov~\cite{Bulatov06:3-elementJACM}),
conservative languages\footnote{A crisp language $\Gamma$ is called conservative if it contains all unary cost functions $f:D\rightarrow\{0,+\infty\}$ \cite{Bulatov03:conservative}.
A general-valued language is called conservative if it contains all unary cost functions $f:D\rightarrow \mathbb R_+$ \cite{KZ10:TRa,KZ10:TRb,K10:TRc}.
} (Bulatov \cite{Bulatov03:conservative}),
and languages containing a single relation without sources and sinks (Barto~\etal\ \cite{Barto09:siam}).
All dichotomy theorems above have the following form: if all functions in $\Gamma$ satisfy a certain
condition given by one or more {\em polymorphisms} then the language is tractable, otherwise it is NP-hard.

For general VCSPs the dichotomy has been shown to hold for Boolean languages, i.e.\ languages with a 2-element domain (Cohen~\etal\ \cite{Cohen:AI06}),
conservative languages (Kolmogorov and \v{Z}ivn\'y~\cite{KZ10:TRa,KZ10:TRb,K10:TRc}, who generalized
previous results by Deineko~\etal~\cite{DeinekoJKK08} and Takhanov \cite{Takhanov10:stacs}),
and $\{0,1\}$-valued languages with a 4-element domain (Jonsson~\etal~\cite{Jonsson:TR11}).
In these examples tractable subclasses are characterized by one or more {\em multimorphisms},
which are generalizations of polymorphisms. A multimorphism of arity $k$ over $D$ is
a tuple $\langle{\tt OP}_1,\ldots,{\tt OP}_k\rangle$ where ${\tt OP}_i$ is an operation $D^k\rightarrow D$.
Language $\Gamma$ is said to admit multimorphism $\langle{\tt OP}_1,\ldots,{\tt OP}_k\rangle$ if every function $f\in\Gamma$ satisfies
$$
f(\bx_1)+\ldots+f(\bx_k)
\ge
f({\tt OP}_1(\bx_1,\ldots,\bx_k))+\ldots+f({\tt OP}_k(\bx_1,\ldots,\bx_k))
$$
for all labelings $\bx_1,\ldots,\bx_k$ with $f(\bx_1)<+\infty$, $\ldots$, $f(\bx_k)<+\infty$.
(The pair of operations $\langle\sqcap,\sqcup\rangle$ used in~\eqref{eq:submodularity} is an example of a binary multimorphism.)
The tractable classes mentioned above (for $|D|>2$) are characterized by 
{\em complementary pairs of STP and MJN} multimorphisms~\cite{KZ10:TRb}
(that generalized {\em symmetric tournament pair (STP)} multimorphisms~\cite{Cohen:TCS08}),
and {\em 1-defect chain} multimorphisms~\cite{Jonsson:TR11}
(that generalized tractable weak-tree submodular functions in section \ref{sec:weak} originally introduced in~\cite{K10:TSv2}).

To make further progress on classifying complexity of VCSPs, it is important to study which
multimorphisms lead to tractable optimisation problems. Operations $\langle\sqcap,\sqcup\rangle$ and $\langle\wedge,\vee\rangle$ introduced in this paper
represent new classes of such multimorphisms: to our knowledge, previously researchers have not considered multimorphisms
defined on trees.

\myparagraph{Combining multimorphisms} Finally, we mention that some constructions, namely {\em Cartesian products} and {\em Malt'stev products},
can be used for obtaining new tractable classes of binary multimoprhisms from existing ones~\cite{KrokhinLarose:08}.
Note, Krokhin and Larose~\cite{KrokhinLarose:08} formulated these constructions only
for lattice multimorphisms $\langle\sqcap,\sqcup\rangle$, but the proof in~\cite{KrokhinLarose:08} actually applies to arbitrary binary multimorphisms $\langle\sqcap,\sqcup\rangle$.


\section{Steepest descent algorithm}\label{sec:alg}

It is known that for $L^\natural$-convex functions local optimality implies global optimality~\cite{Murota:book}.
We start by generalizing this result to strongly tree-submodular functions. 
Let us define the following ``local'' neighborhoods of labeling $\bx\in \D$:
\begin{eqnarray*}
\NEIB(\bx)    & = & \{\by\in \D\:|\: \rho(\bx,\by)\le 1 \} \\
\INWARD(\bx)  & = & \{\by\in \NEIB(\bx) \:|\: \by \preceq \bx \} \\
\OUTWARD(\bx) & = & \{\by\in \NEIB(\bx) \:|\:  \by \succeq \bx \} \\
\end{eqnarray*}
where $\bu\preceq\bv$ means that $u_i\preceq v_i$ for all $i\in V$, and $\rho(\bx,\by)=\max_{i\in V}\rho(x_i,y_i)$ is the $l_\infty$-distance between $\bx$ and $\by$.
Clearly, the restriction of $f$ to $\INWARD(\bx)$ is a submodular function,
and the restriction of $f$ to $\OUTWARD(\bx)$ is bisubmodular assuming that each tree $T_i$ is binary\footnote{If label $x_i$
has less than two children in $T_i$ then variable's domain after restriction will be a strict subset of $\{-1,0,+1\}$.
Therefore, we may need to use a bisubmodular minimization algorithm over a signed ring familiy $\calR\subseteq\{-1,0,+1\}^n$~\cite{McCormick:10}.}.

\begin{proposition} Suppose that {\em $f(\bx)=\min\{ f(\by) \: | \: \by\in\INWARD(\bx)\}  =  \min\{ f(\by) \: | \: \by\in\OUTWARD(\bx)\}$.} Then $\bx$ is a global minimum of $f$.
\label{prop:local}
\end{proposition}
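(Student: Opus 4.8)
The plan is to argue by contradiction. Suppose $\bx$ satisfies the two local-optimality hypotheses but is not a global minimum. Among all labelings $\by\in\D$ with $f(\by)<f(\bx)$, choose one minimizing the $l_\infty$-distance $\mu=\rho(\bx,\by)=\max_{i}\rho(x_i,y_i)$; since $f(\by)<f(\bx)$ forces $\by\ne\bx$, we have $\mu\ge 1$. I would then set $\bz=\bx\sqcap\by$ and $\bw=\bx\sqcup\by$ and feed the pair $(\bx,\by)$ into the strong tree-submodularity inequality~\eqref{eq:submodularity}, obtaining $f(\bz)+f(\bw)\le f(\bx)+f(\by)<2f(\bx)$; in particular at least one of $\bz,\bw$ has value strictly below $f(\bx)$. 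The whole proof then reduces to locating $\bz$ and $\bw$ relative to $\bx$, split into two cases according to $\mu$.

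The easy case is $\mu\ge 2$. Here I would use the elementary fact that in a tree the node $\calP[a\rightarrow b,k]$ lies at distance exactly $k$ from $a$, so that each coordinate $z_i,w_i$, being one of the two midpoints $\calP[x_i\rightarrow y_i,\lfloor d_i/2\rfloor]$, $\calP[x_i\rightarrow y_i,\lceil d_i/2\rceil]$ with $d_i=\rho(x_i,y_i)\le\mu$, satisfies $\rho(x_i,z_i),\rho(x_i,w_i)\le\lceil\mu/2\rceil\le\mu-1$. Hence $\rho(\bx,\bz)<\mu$ and $\rho(\bx,\bw)<\mu$. The point among $\bz,\bw$ with value below $f(\bx)$ is therefore distinct from $\bx$ and strictly closer to $\bx$ than $\by$, contradicting the minimality of $\mu$.

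The real work is the boundary case $\mu=1$, where the two midpoints no longer move strictly inward. Now every coordinate has $d_i\in\{0,1\}$, so $\by$ differs from $\bx$ only by replacing some $x_i$ by a parent (an inward step, $y_i\prec x_i$) or by a child (an outward step, $y_i\succ x_i$). Writing $V^-,V^+,V^0$ for the inward, outward and unchanged coordinates, a direct unwinding of the definition of $\sqcap,\sqcup$ on a path of length one --- where $\{x_i\sqcap y_i,x_i\sqcup y_i\}=\{x_i,y_i\}$ and the ancestor is selected as $\sqcap$ --- shows that $\bz$ coincides with $\bx$ off $V^-$ and steps to the parent on $V^-$, while $\bw$ coincides with $\bx$ off $V^+$ and steps to the child on $V^+$. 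Thus $\bz\in\INWARD(\bx)$ and $\bw\in\OUTWARD(\bx)$, so the two local-optimality hypotheses give $f(\bz)\ge f(\bx)$ and $f(\bw)\ge f(\bx)$, i.e.\ $f(\bz)+f(\bw)\ge 2f(\bx)$, flatly contradicting the submodularity bound $f(\bz)+f(\bw)<2f(\bx)$ derived above.

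I expect the main obstacle to be precisely the bookkeeping in the $\mu=1$ case: one must verify carefully that the $\sqcap$/$\sqcup$ orientation (ancestor versus descendant) aligns the inward coordinates of $\by$ exactly with $\INWARD(\bx)$ through $\bz$, and the outward coordinates exactly with $\OUTWARD(\bx)$ through $\bw$, so that both local hypotheses can be invoked simultaneously on the single inequality. Everything else is routine once the tree-distance identity $\rho(a,\calP[a\rightarrow b,k])=k$ is in hand. Note that binary-ness of the trees is not needed for this proposition; it only enters later, when one actually minimizes over $\OUTWARD(\bx)$ by a bisubmodular algorithm.
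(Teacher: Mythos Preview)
Your proof is correct, and the $\mu\ge 2$ step is exactly the paper's argument. The $\mu=1$ case, however, is handled differently. The paper does not apply~\eqref{eq:submodularity} directly there; instead it picks a minimizer $\bx^\ast$ of $f$ over $\NEIB(\bx)$, restricts $f$ to the product of two-element chains $\D^\ast=\prod_i\{x_i,x^\ast_i\}$ (rooted at $x_i\sqcap x^\ast_i$), observes that this restriction is $L^\natural$-convex, and then invokes the known local-to-global result for $L^\natural$-convex functions \cite[Theorem~7.14]{Murota:book} as a black box. Your argument avoids this external reference entirely: you compute $\bz=\bx\sqcap\by$ and $\bw=\bx\sqcup\by$ coordinatewise, identify $\bz\in\INWARD(\bx)$ and $\bw\in\OUTWARD(\bx)$, and feed both local hypotheses into a single application of~\eqref{eq:submodularity}. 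This is more elementary and fully self-contained; the paper's route is shorter on the page but outsources the work. Your observation that the trees need not be binary for this proposition is also correct and worth keeping.
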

\begin{proof}
First, let us prove that $f(\bx)=\min\{ f(\by) \: | \: \by\in\NEIB(\bx)\}$.
Let $\bx^\ast$ be a minimizer of $f$ in $\NEIB(\bx)$, and denote $\D^\ast=\{\by\in \D\:|\:y_i\in \D^\ast_i=\{x_i,x^\ast_i\}\}\subseteq\NEIB(\bx)$.
We treat set $\D^\ast_i$ as a tree with root $x_i\sqcap x^\ast_i$.
Clearly, the restriction of $f$ to $\D^\ast$ is an $L^\natural$-convex function under 
the induced operations $\sqcap$, $\sqcup$. It is known that for $L^\natural$-convex functions
optimality of $\bx$ in sets $\{\by\in \D^\ast\:|\: \by\preceq \bx\}$ and $\{\by\in \D^\ast\:|\: \by\succeq \bx\}$
suffices for optimality of $\bx$ in $\D^\ast$~\cite[Theorem 7.14]{Murota:book}, therefore $f(\bx)\le f(\bx^\ast)$.
This proves that $f(\bx)=\min\{ f(\by) \: | \: \by\in\NEIB(\bx)\}$.

Let us now prove that $\bx$ is optimal in $\D$.
Suppose not, then there exists $\by\in \D$ with $f(\by)<f(\bx)$.
Among such labelings, let us choose $\by$ with the minimum distance $\rho(\bx,\by)$. We must have $\by\notin\NEIB(\bx)$, so $\rho(\bx,\by)\ge 2$. Clearly,
$\rho(\bx,\bx\sqcup\by)\le \rho(\bx,\by)-1$ and $\rho(\bx,\bx\sqcap\by)\le \rho(\bx,\by)-1$.
Strong tree-submodularity and the fact that $f(\by)<f(\bx)$
imply that the cost of at least one of the labelings $\bx\sqcup\by$, $\bx\sqcap\by$ is smaller than $f(\bx)$.
This contradicts to the choice of $\by$.
\end{proof}

Suppose that each tree $T_i$ is binary.
The proposition shows that a greedy technique for computing a minimizer of $f$ would work. We can start with an arbitrary labeling $\bx\in \D$,
and then apply iteratively the following two steps in some order:
\begin{itemize}
\item[(1)] Compute minimizer $\bx^{\tt in}\in\arg\min\{f(\by)\:|\:\by\in\INWARD(\bx)\}$ by invoking a submodular minimization algorithm, replace $\bx$ with $\bx^{\tt in}$ if $f(\bx^{\tt in})<f(\bx)$.
\item[(2)] Compute minimizer $\bx^{\tt out}\in\arg\min\{f(\by)\:|\:\by \in\OUTWARD(\bx)\}$ by invoking a bisubmodular minimization algorithm, replace $\bx$ with $\bx^{\tt out}$ if $f(\bx^{\tt out})<f(\bx)$. 
\end{itemize}
The algorithm stops if neither step can decrease the cost. Clearly, it terminates in a finite number of steps and produces an optimal solution.
We will now discuss how to obtain a polynomial number of steps. We denote $K=\max_i |D_i|$.

\subsection{$L^\natural$-convex case}
For $L^\natural$-convex functions the {\em steepest descent} algorithm described above was first proposed by Murota~\cite{Murota:IEICE00,Murota:book,Murota:SIAM03},
except that in step 2 a submodular minimization algorithm was used.
Murota's algorithm actually computes both of $\bx^{\tt in}$ and $\bx^{\tt out}$ for the same $\bx$ and then chooses a better one by comparing costs
$f(\bx^{\tt in})$ and $f(\bx^{\tt out})$. A slight variation was proposed by Kolmogorov and Shioura~\cite{KS:09}, who allowed an arbitrary order of steps.
Kolmogorov and Shioura also established a tight bound on the number of steps of the algorithm by proving the following theorem.

\begin{theorem}[\cite{KS:09}]
Suppose that each tree $T_i$ is a chain. For a labeling $\bx\in\D$ define
\begin{subequations}
\begin{eqnarray}
\ifICALP
~\!\!\!\!\!\!\!\!\rho^-(\bx)\!=\!\min \{ \rho(\bx,\by)  |  \by \!\in\! OPT^{-}[\bx] \}, OPT^-[\bx]\!=\!\arg\!\min \{f(\by) |\by\!\in\!\D,\by\preceq\bx\} \; ~ \label{eq:rho:a} \\
~\!\!\!\!\!\!\!\!\rho^+(\bx)\!=\!\min \{ \rho(\bx,\by)  |  \by \!\in\! OPT^{+}[\bx] \}, OPT^+[\bx]\!=\!\arg\!\min \{f(\by) |\by\!\in\!\D,\by\succeq\bx\} \; ~ \label{eq:rho:b}
\else
\rho^-(\bx)=\min \{ \rho(\bx,\by) \: | \: \by \in OPT^{-}[\bx] \}, \quad OPT^-[\bx]=\arg \min \{f(\by) \:|\:\by\in\D,\by\preceq\bx\} \label{eq:rho:a} \\
\rho^+(\bx)=\min \{ \rho(\bx,\by) \: | \: \by \in OPT^{+}[\bx] \}, \quad OPT^+[\bx]=\arg \min \{f(\by) \:|\:\by\in\D,\by\succeq\bx\} \label{eq:rho:b}
\fi
\end{eqnarray}
\label{eq:rho}\vspace{-15pt}
\end{subequations}
\begin{itemize}
\item[(a)] Applying step (1) or (2) to labeling $\bx\!\in\!\D$ does not increase $\rho^-(\bx)$ and $\rho^+(\bx)$.
\item[(b)] If $\rho^-(\bx)\ge 1$ then applying step (1) to $\bx$ will decrease $\rho^-(\bx)$ by 1.
\item[(c)] If $\rho^+(\bx)\ge 1$ then applying step (2) to $\bx$ will decrease $\rho^+(\bx)$ by 1.
\end{itemize}
\label{th:KS}
\end{theorem}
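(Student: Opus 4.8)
The plan is to specialize to the chain case stated in the hypothesis, where $\preceq$ is the coordinatewise order on $\prod_i\{0,1,\ldots,K_i\}$, the operations $\sqcap,\sqcup$ are coordinatewise $\min,\max$, and \eqref{eq:submodularity} is precisely discrete midpoint convexity; thus $f$ is $L^\natural$-convex and I may freely use this property together with ordinary submodularity. The structural starting point is that the restriction of an $L^\natural$-convex function to any box, in particular to the down-set $\{\by\in\D\,|\,\by\preceq\bx\}$, is again $L^\natural$-convex, so its minimizers form a sublattice with a unique coordinatewise-maximal element $\hat\bx$~\cite{Murota:book}. Enlarging a coordinate of a point $\by\preceq\bx$ only shrinks $x_i-y_i$, so this maximal below-minimizer realizes the $l_\infty$-distance: $\hat\bx\in OPT^-[\bx]$ and $\rho^-(\bx)=\rho(\bx,\hat\bx)=\max_i(x_i-\hat x_i)$. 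Dually, $\rho^+(\bx)$ is the distance to the unique minimal above-minimizer $\check\bx$. I would restate all three claims as assertions about how $\hat\bx$ and $\check\bx$ move when $\bx$ is replaced by an inward or an outward minimizer.

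For part (b), write $m=\rho^-(\bx)\ge 1$, let $\bx^{\tt in}$ be any inward minimizer, and set $\alpha=\min_{\by\preceq\bx}f$ and $\beta=f(\bx^{\tt in})$. The key preparatory step is a submodularity computation applied to the pair $\bx^{\tt in},\hat\bx$: since $\hat\bx\sqcup\bx^{\tt in}\in\INWARD(\bx)$ while $\hat\bx\sqcap\bx^{\tt in}$ is a below-point, \eqref{eq:submodularity} forces $f(\hat\bx\sqcap\bx^{\tt in})=\alpha$ and $f(\hat\bx\sqcup\bx^{\tt in})=\beta$. Consequently the minimum of $f$ over $\{\by\preceq\bx^{\tt in}\}$ is still $\alpha$, and the new maximal below-minimizer $\widehat{\bx^{\tt in}}$ satisfies $\hat\bx\sqcap\bx^{\tt in}\preceq\widehat{\bx^{\tt in}}\preceq\hat\bx$. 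Using $x^{\tt in}_i\ge x_i-1$ at a coordinate attaining the gap $m$, this already yields the easy inequality $\rho^-(\bx^{\tt in})\ge m-1$, and a short case analysis on the sign of $x^{\tt in}_i-\hat x_i$ shows that the reverse inequality $\rho^-(\bx^{\tt in})\le m-1$ is equivalent to the single assertion that \emph{every} inward minimizer strictly decreases \emph{every} maximal-gap coordinate. Part (c) is the order-reversal of (b), obtained by swapping each root with the opposite end of its chain, which interchanges $\preceq$ with $\succeq$, $\INWARD$ with $\OUTWARD$, and $\hat\bx$ with $\check\bx$.

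Part (a) then splits into the two same-direction cases, which are immediate from (b) and (c) together with $\rho^-,\rho^+\ge 0$, and two cross cases, for which I would again invoke submodularity to show that the minimal above-minimizer of the new point stays within the previous $l_\infty$-distance, so that $\rho^+$ cannot grow under an inward step and symmetrically $\rho^-$ cannot grow under an outward step. I expect essentially all of the difficulty to sit in the maximal-gap assertion of part (b). When $m=1$ it is easy, because $\hat\bx\in\INWARD(\bx)$ forces $\beta=\alpha$ and every inward minimizer is then itself a below-minimizer; but for $m\ge 2$ the natural midpoint of $\bx^{\tt in}$ and $\hat\bx$ falls outside $\INWARD(\bx)$, so a single application of discrete midpoint convexity cannot be compared against $\beta$. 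The real work is to iterate this comparison --- an induction on $m$ that repeatedly halves the gap while keeping the ceiling-midpoint a valid below-point --- and to show that an undecreased maximal-gap coordinate would ultimately manufacture a below-minimizer lying strictly above $\hat\bx$, contradicting its maximality. Making this induction precise is the main obstacle.
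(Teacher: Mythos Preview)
The paper does not actually prove this theorem: it is quoted from~\cite{KS:09} and used as a black box in the proof of Theorem~\ref{th:complexity}. So there is no ``paper's own proof'' to compare against. What the paper does tell you, however, is \emph{how}~\cite{KS:09} proves it: just after stating Theorem~\ref{th:submodularity:dir} the text remarks that translation submodularity ``was heavily used, for example, in~\cite{KS:09} for proving theorem~\ref{th:KS}.'' That is the intended tool, not bare discrete midpoint convexity.

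Your plan is not that route. You try to run everything through ordinary submodularity plus a single midpoint step, and you correctly identify exactly where this stalls: for $m\ge 2$ the midpoint of $\bx^{\tt in}$ and $\hat\bx$ is not in $\INWARD(\bx)$, so one application of~\eqref{eq:submodularity} gives you nothing to compare against $\beta$. Your proposed fix --- an induction on $m$ that repeatedly halves the gap --- is vague and, as you admit, is the ``main obstacle''; you have not shown it closes. By contrast, translation submodularity (in chain notation, $f(\bu)+f(\bv)\ge f((\bu+d\cdot\mathbf{1})\wedge\bv)+f(\bu\vee(\bv-d\cdot\mathbf{1}))$) lets you shift by an \emph{arbitrary} integer $d$ in one shot, so a single application with $d=m-1$ (or $d=1$, depending on which direction you argue) places one of the two outputs directly in $\INWARD(\bx)$ and the other at distance $\le m-1$ below $\bx^{\tt in}$, which is exactly the comparison you are missing. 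The cross cases in part~(a) fall out the same way. I would recast the argument around~\eqref{eq:submodularity:dir} rather than~\eqref{eq:submodularity}; the structural observations you make about $\hat\bx$ and $\check\bx$ are correct and are still the right scaffolding.
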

In the beginning of the algorithm we have $\rho^-(\bx)\le K$ and $\rho^+(\bx)\le K$,
so the theorem implies that after at most $K$ calls to step (1) and $K$ calls to step (2) we get
$\rho^-(\bx)=\rho^+(\bx)=0$. The latter condition means that
$f(\bx)=\min\{ f(\by) \: | \: \by\in\INWARD(\bx)\}  =  \min\{ f(\by) \: | \: \by\in\OUTWARD(\bx)\}$,
and thus, by proposition~\ref{prop:local}, $\bx$ is a global minimum of $f$.

\subsection{General case}
We now show that the bound $O(K)$ on the number of steps 
is also achievable for general strongly tree-submodular functions. 
We will establish it for the following version of the steepest descent algorithm:
\begin{itemize}
\item[S0] Choose an arbitrary labeling $\bx^\circ\in \D$ and set $\bx:=\bx^\circ$.
\item[S1] Compute minimizer $\bx^{\tt in}\in\arg\min \{ f(\by) \:|\: \by\in\INWARD(\bx) \}$. If $f(\bx^{\tt in})<f(\bx)$
then set $\bx:=\bx^{\tt in}$ and repeat step S1, otherwise go to step S2.
\item[S2] Compute minimizer $\bx^{\tt out}\in\arg\min \{ f(\by) \:|\: \by\in\OUTWARD(\bx) \}$. If $f(\bx^{\tt out})<f(\bx)$ then set $\bx:=\bx^{\tt out}$ and repeat step S2, otherwise terminate.
\end{itemize}
Note, one could choose $x^\circ_i$ to be the root of tree $T_i$ for each node $i\in V$, then step {\tt S1} would be redundant. 

\begin{theorem}
(a) Step {\tt S1} is performed at most $K$ times. (b) Each step {\tt S2} preserves the following
property:
\begin{equation}
\mbox{\em $f(\bx)=\min \{f(\by)\:|\:\by\in\INWARD(\bx)\}$}
\label{eq:complexity:b}
\end{equation}
(c) Step {\tt S2} is performed at most $K$ times.
(d) Labeling $\bx$ produced upon termination of the algorithm is a minimizer of $f$.
\label{th:complexity}
\end{theorem}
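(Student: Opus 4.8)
The plan is to track the two potentials $\rho^-(\bx)$ and $\rho^+(\bx)$ of Theorem~\ref{th:KS} and to show that step {\tt S1} drives $\rho^-$ to $0$ and step {\tt S2} drives $\rho^+$ to $0$, each losing one unit of its potential per successful move. Two structural observations organize everything. First, the inward world is a product of chains: the down-set $\{\by\in\D:\by\preceq\bx\}$ is the product of the root-to-$x_i$ paths, and the induced operations $\sqcap,\sqcup$ are exactly the midpoint operations there, so the restriction of $f$ to this down-set is $L^\natural$-convex. Second, applying Murota's local-to-global theorem (the one used in proposition~\ref{prop:local}) to this $L^\natural$-convex restriction, and using that $\bx$ is the top element of its own down-set, the \emph{local} inward optimality~\eqref{eq:complexity:b} is equivalent to the \emph{global} inward optimality $\rho^-(\bx)=0$, i.e.\ $f(\bx)=\min\{f(\by):\by\preceq\bx\}$. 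I use this equivalence throughout. Part (a) is then a reduction to Theorem~\ref{th:KS}: every labeling produced during the {\tt S1} phase lies in the product of chains $\{\by:\by\preceq\bx^\circ\}$ (since $\bx$ only moves inward and $\INWARD(\bx')\subseteq\{\by:\by\preceq\bx^\circ\}$ for $\bx'\preceq\bx^\circ$), and {\tt S1} is precisely inward steepest descent (step (1)) for the $L^\natural$-convex restriction; by Theorem~\ref{th:KS}(b) each successful {\tt S1} lowers $\rho^-$ by one, and since $\rho^-(\bx^\circ)\le K$ with {\tt S1} halting exactly when $\rho^-=0$, there are at most $K$ of them.

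For part (b), note that {\tt S1} exits with~\eqref{eq:complexity:b}, equivalently $\rho^-(\bx)=0$. Let $\bx'=\bx^{\tt out}$ be the outward minimizer computed by {\tt S2} (only the update case matters). To get~\eqref{eq:complexity:b} at $\bx'$ it suffices to show $f(\bx')\le f(\bz)$ for every $\bz\preceq\bx'$. Fix such a $\bz$. Since $\bx\preceq\bx'$ with $\rho(\bx,\bx')\le1$ and $\bz\preceq\bx'$, in each coordinate either $z_i\preceq x_i$ or $z_i$ is the child of $x_i$ through which $\bx'$ advanced; a short case check gives $z_i\wedge x_i\preceq x_i$ and $z_i\vee x_i\succeq x_i$ with $\rho(x_i,z_i\vee x_i)\le1$. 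Hence $\bz\wedge\bx\preceq\bx$ and $\bz\vee\bx\in\OUTWARD(\bx)$. Weak tree-submodularity~\eqref{eq:submodularity:weak}, which strong tree-submodularity implies, yields $f(\bz)+f(\bx)\ge f(\bz\wedge\bx)+f(\bz\vee\bx)$; by inward optimality $f(\bz\wedge\bx)\ge f(\bx)$, and by the defining property of the outward minimizer $f(\bz\vee\bx)\ge f(\bx')$, so $f(\bz)\ge f(\bx')$. This proves (b), and part (d) is then immediate: on termination {\tt S2} guarantees $f(\bx)=\min\{f(\by):\by\in\OUTWARD(\bx)\}$ while (b) guarantees $f(\bx)=\min\{f(\by):\by\in\INWARD(\bx)\}$, so proposition~\ref{prop:local} shows $\bx$ is a global minimizer.

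Part (c) is the heart of the matter. Here the outward neighbourhood is \emph{not} a product of chains: on a binary tree each $x_i$ may branch to two children, so Theorem~\ref{th:KS} no longer applies and its chain-based analysis must be redone for the bisubmodular geometry. What I would prove is the binary-tree analogue of Theorem~\ref{th:KS}(a,c): while the invariant~\eqref{eq:complexity:b} is maintained (guaranteed by part (b)), a step {\tt S2} never increases $\rho^+(\bx)$, and if $\rho^+(\bx)\ge1$ it decreases it by exactly one. Granting this, the count follows as in the chain case: $\rho^+\le K$ initially, each successful {\tt S2} lowers it by one, and $\rho^+(\bx)=0$ is exactly the outward optimality that halts {\tt S2}; moreover when $\rho^+(\bx)\ge1$ a genuine descent step is forced, since otherwise $f(\bx)=\min\{f(\by):\by\in\OUTWARD(\bx)\}$ together with~\eqref{eq:complexity:b} would make $\bx$ globally optimal by proposition~\ref{prop:local}, contradicting $\rho^+(\bx)\ge1$. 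Hence at most $K$ steps {\tt S2}.

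The decrease-by-one lemma is where I expect the real difficulty. The mechanism I would use is an exchange argument: take a nearest outward optimizer $\by^\ast\in OPT^+[\bx]$ with $\rho(\bx,\by^\ast)=\rho^+(\bx)=r\ge1$ and compare it with the steepest-descent point $\bx'=\bx^{\tt out}$. Applying $\sqcap,\sqcup$ to $\bx'$ and $\by^\ast$ and combining submodularity~\eqref{eq:submodularity} with the optimality of $\by^\ast$ over $\{\by:\by\succeq\bx\}$ should produce a new outward optimizer within distance $r-1$ of $\bx'$. The delicate coordinates are those where the move $\bx\to\bx'$ branched to a different child of $x_i$ than $\by^\ast$ uses; there I expect to re-route using the maintained inward optimality $\rho^-(\bx)=0$. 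Carrying this uncrossing through all coordinate configurations of the binary tree, and checking that the resulting labeling is both outward-optimal and exactly one step closer, is the main obstacle, and is precisely where the bisubmodular structure of $\OUTWARD(\bx)$ must be exploited rather than the chain structure used for part (a).
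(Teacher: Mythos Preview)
Your treatment of parts (a), (b), (d) is fine; in particular your direct proof of (b) via weak tree-submodularity is a nice alternative to the paper's reduction. But for part (c) you have overlooked the paper's key observation and, as a result, flagged a non-existent difficulty.

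The paper does \emph{not} redo the Kolmogorov--Shioura analysis in a bisubmodular setting. Instead it argues retroactively from the terminal point. Let $\bz$ be the labeling at which phase {\tt S2} terminates. Every labeling visited during phase {\tt S2} satisfies $\bx\preceq\bz$, hence lies in $\D^-[\bz]=\{\by:\by\preceq\bz\}$, which is a product of chains (root-to-$z_i$ paths). The restriction of $f$ to $\D^-[\bz]$ is therefore $L^\natural$-convex, and each {\tt S2} move, being a minimizer over the full $\OUTWARD(\bx)$ that happens to lie in $\D^-[\bz]$, is in particular a minimizer over the restricted $\OUTWARD(\bx)$; so it is a valid outward steepest-descent step for the $L^\natural$-convex restriction. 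Theorem~\ref{th:KS}(c) then gives the $K$-step bound immediately, and Theorem~\ref{th:KS}(a) gives part (b) in the same breath. No bisubmodular exchange argument is needed.

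Your proposed uncrossing for (c) may well be completable, but the ``delicate coordinates where $\bx'$ branches to a different child than $\by^\ast$'' are precisely the coordinates that disappear once you restrict to $\D^-[\bz]$: in hindsight the algorithm never used the second child. That is the whole trick. As written, your proposal leaves (c) unproved, whereas the paper's reduction closes it in two lines by reusing Theorem~\ref{th:KS}.
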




\begin{proof}
For a labeling $\bx\in\D$ denote $\D^-[\bx]=\{\by\in\D\:|\:\by\preceq\bx\}$.
We will treat domain $\D^-[\bx]$ as the collection of chains with roots $r_i$ and leaves $x_i$. 
Let $\rho^-(\bx)$ be the quantity defined in~\eqref{eq:rho:a}. There holds
\begin{equation}
f(\bx)=\min \{f(\by)\:|\:\by\in\INWARD(\bx)\}\qquad \Leftrightarrow \qquad \rho^-(\bx)=0
\label{eq:StwoInv}
\end{equation}
Indeed, this equivalence can be obtained by applying proposition~\ref{prop:local} to function $f$ restricted to $\D^-[\bx]$.

\myparagraph{(a)}
When analyzing the first stage of the algorithm, we can assume without loss of generality
that $\D=\D^-[\bx^\circ]$, i.e.\ each tree $T_i$ is a chain with the root $r_i$ and
the leaf $x^\circ_i$. 
Indeed, removing the rest of the tree will not affect the behaviour of steps {\tt S1}.
With such assumption, function $f$ becomes $L^\natural$-convex. 
By theorem~\ref{th:KS}(b), steps {\tt S1} will terminate after at most $K$ steps.

\myparagraph{(b,c)} Property~\eqref{eq:complexity:b} (or equivalently $\rho^-(\bx)=0$) clearly
holds after termination of steps {\tt S1}. Let $\bz$ be the labeling upon
termination of steps {\tt S2}. 
When analyzing the second stage of the algorithm, we can assume without loss of generality
that $\D=\D^-[\bz]$, i.e.\ each tree $T_i$ is a chain with the root $r_i$ and
the leaf $z_i$. 
Indeed, removing the rest of the tree will not affect the behaviour of steps {\tt S2}.
Furthermore, restricting $f$ to $\D^-[\bz]$ does not affect the definition of $\rho^-(\bx)$ for $\bx\in\D^-[\bz]$.

By theorem~\ref{th:KS}(a), steps {\tt S2} preserve $\rho^-(\bx)=0$; this proves part (b).
Part (c) follows from theorem~\ref{th:KS}(c).

\myparagraph{(d)} When steps {\tt S2} terminate, we have $f(\bx)=\min \{f(\by)\:|\:\by\in\OUTWARD(\bx)\}$.
Combining this fact with condition~\eqref{eq:complexity:b} and using proposition~\ref{prop:local}
gives that upon algorithm's termination $\bx$ is a minimizer of $f$.

\end{proof}



\section{Translation submodularity}\label{sec:sw}
In this section we derive an alternative definition of strongly tree-submodular functions.
As a corollary, we will obtain that strong tree submodularity~\eqref{eq:submodularity}
implies weak tree submodularity~\eqref{eq:submodularity:weak}.

Let us introduce another pair of operations on trees. 
Given labels $a,b\in D_i$ and an integer $d\ge 0$, we define 
$$
a \uparrow^d b   = \calP[a\rightarrow b,d]\wedge b \qquad
a \downarrow_d b = \calP[a\rightarrow b,\rho(a \uparrow^d b,b)]
$$
In words, $a \uparrow^d b$ is obtained as follows: (1) move from $a$ towards $b$ by $d$ steps,
stopping if $b$ is reached earlier; (2) keep moving until the current label becomes an ancestor of $b$.
$a \downarrow_d b$ is the label on the path $\calP[a\rightarrow b]$ such that the distances
$\rho(a,a \downarrow_d b)$ and $\rho(a \uparrow^d b,b)$ are the same,
as well as distances $\rho(a,a \uparrow^d b)$ and $\rho(a \downarrow_d b,b)$.
Note, binary operations $\uparrow^d,\downarrow_d:D_i\times D_i\rightarrow D_i$ 
(and corresponding operations $\uparrow^d,\downarrow_d:\D\times \D\rightarrow \D$) are in general non-commutative.
One exception is $d=0$, in which case  $\uparrow^d,\downarrow_d$ reduce to the commutative
operations defined in the introduction: $\bx\uparrow^0\by=\bx\wedge\by$ and $\bx\downarrow_0\by=\bx\vee\by$.

For fixed labels $a,b\in D_i$ it will often be convenient to rename nodes in $\calP[a\rightarrow b]$ to be
consecutive integers so that $a\wedge b=0$ and $a\le 0\le b$. Then we have $a=-\rho(a,a\wedge b)$, $b=\rho(a\wedge b,b)$ and
$$
a \uparrow^d b = \max\{0,\min\{ a+d,b \}\} \qquad
a \downarrow_d b = a + b - (a \uparrow^d b)
$$

\begin{theorem} (a) If $f$ is strongly tree-submodular then for any $\bx, \by\in \D$ and integer $d\ge 0$ there holds
{\em
\begin{equation}
f(\bx)+f(\by)\ge f(\bx \uparrow^d \by) + f(\bx \downarrow_d \by)
\label{eq:submodularity:dir}
\end{equation}
\em }
(b) If~\eqref{eq:submodularity:dir} holds for any $\bx, \by\in \D$ and $d\ge 0$ then $f$ is strongly tree-submodular.
\label{th:submodularity:dir}
\end{theorem}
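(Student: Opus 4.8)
The plan is to prove both directions by reducing everything to the single-tree (coordinate-wise) case and then, for each coordinate, to a one-dimensional statement about a path, exploiting the renaming convention in which $\calP[a\rightarrow b]$ becomes a set of consecutive integers with $a\wedge b = 0$ and $a\le 0\le b$. The crucial observation is that both the operations $\sqcap,\sqcup$ from the introduction and the operations $\uparrow^d,\downarrow_d$ are defined purely along the path $\calP[x_i\rightarrow y_i]$; since submodularity~\eqref{eq:submodularity} is a pointwise inequality that must hold for all pairs simultaneously, I can fix an arbitrary pair $\bx,\by$ and work coordinate by coordinate, where in coordinate $i$ everything lives on the integer interval $[a_i, b_i]$ with $a_i = x_i\wedge y_i$ relabelled.

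For part (a), I would use the integer formulas $a\uparrow^d b = \max\{0,\min\{a+d,b\}\}$ and $a\downarrow_d b = a+b-(a\uparrow^d b)$. First I would observe that strong tree-submodularity~\eqref{eq:submodularity} is exactly the statement~\eqref{eq:submodularity:dir} for a specific midpoint value of $d$: in the integer coordinates, $\{x_i\sqcap y_i, x_i\sqcup y_i\}$ is $\{\lfloor\frac{a_i+b_i}{2}\rfloor,\lceil\frac{a_i+b_i}{2}\rceil\}$, which matches $\{x_i\uparrow^d y_i, x_i\downarrow_d y_i\}$ when $d$ is chosen to land the forward-moved label at the midpoint. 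The heart of the argument is then to show that~\eqref{eq:submodularity:dir} for a general $d$ follows from the midpoint case. My intended mechanism is a telescoping/averaging argument: I want to express the pair $(\bx\uparrow^d\by,\bx\downarrow_d\by)$ as obtainable from $(\bx,\by)$ by a sequence of ``one-step'' moves, where each one-step move is an application of the $\sqcap,\sqcup$-type operation (midpoint) to an intermediate pair, so that~\eqref{eq:submodularity} can be applied repeatedly and the intermediate $f$-values cancel in a chain of inequalities. Concretely, I would set up a sequence of pairs interpolating between $(\bx,\by)$ and $(\bx\uparrow^d\by,\bx\downarrow_d\by)$ that moves the ``split point'' one notch at a time along each path, verify that consecutive pairs are related by the midpoint operation applied to some auxiliary pair, and then sum the inequalities.

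For part (b), the converse, the plan is immediate once part (a)'s correspondence is nailed down: strong tree-submodularity is literally the instance of~\eqref{eq:submodularity:dir} at the midpoint value of $d$ (coordinatewise), so if~\eqref{eq:submodularity:dir} holds for \emph{all} $d\ge 0$ it holds in particular for that $d$, giving back~\eqref{eq:submodularity}. The only care needed is that a single $d$ must simultaneously realize the midpoint in every coordinate; since the operations $\uparrow^d,\downarrow_d$ truncate at $b_i$ and clamp at $0$, a sufficiently large $d$ (for instance $d\ge\max_i\rho(x_i,y_i)$) sends each forward label as far as it can go, and I expect that by choosing $d$ appropriately — or by noting that the midpoint is reached for the correct per-coordinate value and that the clamping makes large $d$ coordinatewise consistent — one recovers $\sqcap,\sqcup$ exactly.

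The main obstacle I anticipate is the telescoping step in part (a): making precise the claim that moving the split point by one along the path corresponds to a valid application of the binary midpoint operation, and checking that the non-commutativity of $\uparrow^d,\downarrow_d$ does not derail the induction. The delicate bookkeeping is that $a\uparrow^d b$ first walks $d$ steps toward $b$ and then retreats to an ancestor of $b$; I must confirm that the intermediate labelings I interpolate through are themselves of the form $\bx'\uparrow^1\by'$ and $\bx'\downarrow_1\by'$ (or the raw $\sqcap,\sqcup$), so that each individual inequality is genuinely an instance of~\eqref{eq:submodularity} rather than something I still need to prove. Handling the asymmetry between the ascending and descending phases of the operation, and confirming the distance identities $\rho(a,a\downarrow_d b)=\rho(a\uparrow^d b,b)$ survive each one-step move, is where the real work lies.
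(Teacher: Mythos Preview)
Your plan for part (a) is in the right spirit: the paper also proves it by a telescoping-style induction that manufactures intermediate pairs and applies \eqref{eq:submodularity} or the induction hypothesis to each. Concretely, the paper inducts on a lexicographic order over valid triplets $(\bx,\by,d)$ and, much as you anticipate in your last paragraph, most of the labour goes into constructing auxiliary labelings $\bx',\by'$ and verifying four identities so that two inequalities (one from the hypothesis, one from \eqref{eq:submodularity}) sum and cancel to yield \eqref{eq:submodularity:dir}. There turn out to be several subcases (five at the final stage), so your intuition that this is where the real bookkeeping lives is correct.

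Your plan for part (b), however, has a genuine gap. You assert that strong tree-submodularity ``is literally the instance of~\eqref{eq:submodularity:dir} at the midpoint value of $d$'' and then flag the issue that a \emph{single} $d$ must work in every coordinate simultaneously---but neither of your proposed fixes resolves it. Taking $d\ge\rho(\bx,\by)$ gives $\bx\uparrow^d\by=\by$ and $\bx\downarrow_d\by=\bx$, so \eqref{eq:submodularity:dir} degenerates to the tautology $f(\bx)+f(\by)\ge f(\by)+f(\bx)$; it does not recover $\sqcap,\sqcup$. Nor is there any intermediate $d$ that works: in the integer coordinates one has $x_i\uparrow^d y_i=\max\{0,\min\{x_i+d,y_i\}\}\in[0,y_i]$, whereas the midpoints $\lfloor(x_i+y_i)/2\rfloor$, $\lceil(x_i+y_i)/2\rceil$ can both be strictly negative (whenever $|x_i|>y_i+1$), so neither component of $\{\bx\uparrow^d\by,\bx\downarrow_d\by\}$ can land on them for \emph{any} $d$. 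Even when all midpoints are nonnegative, hitting the midpoint in coordinate $i$ forces $d\approx(y_i-x_i)/2$, which varies with $i$.

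Consequently part (b) is not a one-line specialisation. The paper proves it by induction on $\rho_1(\bx,\by)=\sum_i\rho(x_i,y_i)$, splitting on the parity of $d_{\max}=\rho(\bx,\by)$. In each case one builds auxiliary $\bx',\by'$ with $\rho_1$ strictly smaller on one pair, applies the induction hypothesis to that pair and the assumed inequality~\eqref{eq:submodularity:dir} (with $d=\lfloor d_{\max}/2\rfloor$) to the other, and checks four identities so that summing and cancelling $f(\bx')+f(\by')$ leaves exactly~\eqref{eq:submodularity}. Structurally this is the same telescoping machinery as part (a), just run in the opposite direction; you should plan for comparable effort rather than treating (b) as immediate.
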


Note, this result is well-known for $L^\natural$-convex functions~\cite[section 7.1]{Murota:book},
i.e.\ when all trees are chains shown in Figure~\ref{fig:trees}(b); inequality~\eqref{eq:submodularity:dir} was then written as 
$f(\bx)+f(\by)\ge f((\bx+d \cdot{\bf 1})\wedge \by) + f(\bx\vee (\by- d \cdot{\bf 1}))$, and was called {\em translation submodularity}.
In fact, translation submodularity is one of the key properties of $L^\natural$-convex functions, and was heavily used, for example, in~\cite{KS:09}
for proving theorem~\ref{th:KS}.

Setting $d=0$ in theorem \ref{th:submodularity:dir}(a) gives
\begin{corollary}
A strongly tree-submodular function $f$ is also weakly tree-submodular, i.e.\ \eqref{eq:submodularity} implies \eqref{eq:submodularity:weak}.
\end{corollary}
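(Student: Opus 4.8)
The plan is to obtain the corollary as the single specialization $d=0$ of Theorem~\ref{th:submodularity:dir}(a), which we are free to assume. That theorem asserts that a strongly tree-submodular $f$ satisfies \eqref{eq:submodularity:dir}, namely $f(\bx)+f(\by)\ge f(\bx\uparrow^d\by)+f(\bx\downarrow_d\by)$, for every $\bx,\by\in\D$ and every integer $d\ge 0$. As recorded just before the theorem, the operations $\uparrow^d,\downarrow_d$ collapse to the commutative operations $\wedge,\vee$ at $d=0$: one has $\bx\uparrow^0\by=\bx\wedge\by$ and $\bx\downarrow_0\by=\bx\vee\by$. Substituting $d=0$ into \eqref{eq:submodularity:dir} therefore yields $f(\bx)+f(\by)\ge f(\bx\wedge\by)+f(\bx\vee\by)$ for all $\bx,\by\in\D$, which is exactly \eqref{eq:submodularity:weak}. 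Hence $f$ is weakly tree-submodular, as claimed.

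The only thing that actually requires checking is the $d=0$ reduction of $\uparrow^d,\downarrow_d$, and this is immediate from the coordinate description given in the text: renaming the nodes of $\calP[a\rightarrow b]$ so that $a\wedge b=0$ and $a\le 0\le b$, one gets $a\uparrow^0 b=\max\{0,\min\{a,b\}\}=0=a\wedge b$ (since $a\le 0$) and $a\downarrow_0 b=a+b-(a\uparrow^0 b)=a+b=a\vee b$. Equivalently, and without coordinates, $\calP[a\rightarrow b,0]=a$ gives $a\uparrow^0 b=a\wedge b$ directly. Applying this component-wise yields the two identities above. Beyond this, the corollary carries no additional content: all of its force is inherited from Theorem~\ref{th:submodularity:dir}(a).

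Consequently there is no real obstacle in the corollary itself; the difficulty sits entirely in part (a) of the theorem, which we assume. If one instead wanted a self-contained derivation of \eqref{eq:submodularity:weak} directly from strong tree-submodularity \eqref{eq:submodularity}, the natural route would be a downward induction on $d$ in \eqref{eq:submodularity:dir}, starting from the trivial case $d\ge\max_i\rho(x_i,y_i)$ (where $\bx\uparrow^d\by=\by$ and $\bx\downarrow_d\by=\bx$, so the inequality is the identity) and ending at the $d=0$ case we want. The hard part would be the inductive step $d+1\mapsto d$: a single application of the discrete midpoint inequality \eqref{eq:submodularity} contracts a pair of labelings all the way to its central pair along $\calP[\cdot\rightarrow\cdot]$ rather than by one step, so the step cannot be realized by one invocation of \eqref{eq:submodularity} and needs the more delicate pairing argument underlying Theorem~\ref{th:submodularity:dir}.
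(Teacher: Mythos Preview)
Your argument is correct and is exactly the paper's own derivation: the corollary is obtained by setting $d=0$ in Theorem~\ref{th:submodularity:dir}(a) and invoking the identities $\bx\uparrow^0\by=\bx\wedge\by$, $\bx\downarrow_0\by=\bx\vee\by$ stated just before the theorem. Your extra verification of these identities and the commentary on a hypothetical self-contained proof are fine but go beyond what the paper records.
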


A proof of parts (b) and (a) of theorem~\ref{eq:submodularity:dir} is given 
in sections \ref{ref:proof:partb} and \ref{ref:proof:parta} respectively.
In both proofs we always implicitly assume
that for each $i\in V$ labels in $\calP[x_i\rightarrow y_i]$ are renamed to be consecutive integers with $x_i\wedge y_i=0$ and $x_i\le 0\le y_i$.

\subsection{Proof of theorem~\ref{th:submodularity:dir}(b)}\label{ref:proof:partb}
We prove inequality \eqref{eq:submodularity} for $\bx,\by\in\D$ using induction on $\rho_1(\bx,\by)=\sum_{i\in V}\rho(x_i,y_i)$.
The base case $\rho_1(\bx,\by)=0$, or $\bx=\by$, is trivial; suppose that $\rho_1(\bx,\by)\ge 1$.
Denote $d_{\max}=\rho(\bx,\by)\ge 1$ and $d=\lfloor d_{\max}/2 \rfloor\ge 0$. Two cases are possible.

\myparagraph{\underline{Case 1}} $d_{\max}$ is even.
We can assume without loss of generality that there exists $k\in V$ such that $y_k-x_k=d_{\max}$ and
$|x_k|\ge y_k$.
(If there is no such $k$, we can simply swap $\bx$ and $\by$; inequality \eqref{eq:submodularity}
will be unaffected since operations $\sqcap,\sqcup$ are commutative, and $\rho(\bx,\by)$, $\rho_1(\bx,\by)$ will not change.)
Consider labelings $\bx',\by'\in\D$ defined as follows:
\begin{equation*}
y'_i  =  \begin{cases}
y_i-1 & \mbox{ if~} y_i-x_i=d_{\max}, \; |x_i| \ge y_i \\
y_i   & \mbox{ otherwise}
\end{cases}
 \qquad\qquad\qquad
x'_i  =  x_i \sqcup y'_i
\end{equation*}
for each $i\in V$. We claim that
\begin{equation*}
\begin{tabular}{ll}
(a)$\quad \bx \sqcap \by' = \bx \sqcap \by \qquad \quad  $ &
(b)$\quad \bx \sqcup \by' = \bx' $ \\
(c)$\quad \bx' \uparrow^{d} \by =  \by' $ &
(d)$\quad \bx' \downarrow_{d} \by = \bx \sqcup \by$
\end{tabular}
\end{equation*}
Indeed, for each node $i\in V$ one of the following holds:
\begin{itemize}
\item[$\bullet$] $y_i-x_i\le d_{\max}-1$. Then $y'_i=y_i$, $x'_i=x_i \sqcup y_i$, so (a) and (b) hold for node $i$.
We also have $y_i - x'_i = y_i - (x_i\sqcup y_i) \le \lceil (y_i - x_i)/2 \rceil \le \lceil (d_{\max}-1)/2 \rceil \le d$,
which implies (c) and (d). 
\item[$\bullet$] $y_i-x_i=d_{\max}$ and $|x_i|<y_i$. Then $y'_i=y_i$, $x'_i=x_i \sqcup y_i=(x_i+y_i)/2$, 
$y_i - x'_i = d$; as above, this implies (a)-(d). 
\item[$\bullet$] $y_i-x_i=d_{\max}$ and $|x_i|\ge y_i$. Then $y'_i=y_i-1$, $x'_i = x_i \sqcup y'_i = \lfloor(x_i+y_i-1)/2\rfloor=(x_i+y_i)/2-1=y'_i-d$.
Checking that (a)-(d) hold is straightforward.
\end{itemize}
We have $y'_k=y_k-1$, and so $\rho_1(\bx,\by')<\rho_1(\bx,\by)$. Therefore, 
\begin{eqnarray*}
f(\bx)+f(\by') \ge  f(\bx \sqcap \by') + f(\bx \sqcup \by') & \quad &
f(\bx')+f(\by) \ge  f(\bx' \uparrow^{d} \by) + f(\bx' \downarrow_{d} \by)
\end{eqnarray*}
where the first inequality follows from the induction hypothesis and the second one follows from \eqref{eq:submodularity:dir}.
Summing these inequalities and subtracting $f(\bx')+f(\by')$ from both sides using (a)-(d) gives \eqref{eq:submodularity}.

\myparagraph{\underline{Case 2}} $d_{\max}$ is odd.
By swapping $\bx$ and $\by$, if necessary, we can assume without loss of generality that there exists $k\in V$ such that $y_k-x_k=d_{\max}$ and
$|x_k|<y_k$. (Note, we cannot have $y_i-x_i=d_{\max}$ and $|x_i|=y_i$ since $d_{\max}$ is odd).
Consider labelings $\bx',\by'\in\D$ defined as follows:
\begin{equation*}
x'_i  =  \begin{cases}
x_i+1 & \mbox{ if~} y_i-x_i=d_{\max}, \; |x_i| < y_i \\
x_i   & \mbox{ otherwise}
\end{cases}
 \qquad\qquad\qquad
y'_i  =  x'_i \sqcap y_i
\end{equation*}
for each $i\in V$. We claim that
\begin{equation*}
\begin{tabular}{ll}
(a)$\quad \bx' \sqcap \by = \by' \qquad \quad  $ &
(b)$\quad \bx' \sqcup \by = \bx \sqcup \by $ \\
(c)$\quad \bx \uparrow^{d} \by' =  \bx \sqcap \by $ &
(d)$\quad \bx \downarrow_{d} \by' = \bx'$
\end{tabular}
\end{equation*}
Indeed, for each node $i\in V$ one of the following holds:
\begin{itemize}
\item[$\bullet$] $y_i-x_i\le d_{\max}-1$. Then $x'_i=x_i$, $y'_i=x_i \sqcap y_i$, so (a) and (b) hold for node $i$.
We also have $y'_i - x_i = (x_i\sqcap y_i)-x_i \le \lceil (y_i - x_i)/2 \rceil \le \lceil (d_{\max}-1)/2 \rceil \le d$,
which implies (c) and (d).
\item[$\bullet$] $y_i-x_i=d_{\max}$ and $|x_i|>y_i$. Then $x'_i=x_i$, $y'_i=x_i \sqcup y_i=\lceil(x_i+y_i)/2\rceil\le 0$,
so (a) and (b) hold for node $i$.
(c) and (d) hold since $y'_i\le 0$.
\item[$\bullet$] $y_i-x_i=d_{\max}$ and $|x_i|<y_i$. Then $x'_i=x_i+1$,
$y'_i = x'_i \sqcap y_i = \lfloor(x_i+y_i-1)/2\rfloor$.
Checking that (a)-(d) hold is straightforward.
\end{itemize}
We have $x'_k=x_k+1$, and so $\rho_1(\bx',\by)<\rho_1(\bx,\by)$. Therefore, 
\begin{eqnarray*}
f(\bx')+f(\by) \ge  f(\bx' \sqcap \by) + f(\bx' \sqcup \by) & \quad &
f(\bx)+f(\by') \ge  f(\bx \uparrow^{d} \by') + f(\bx \downarrow_{d} \by')
\end{eqnarray*}
where the first inequality follows from the induction hypothesis and the second one follows from \eqref{eq:submodularity:dir}.
Summing these inequalities and subtracting $f(\bx')+f(\by')$ from both sides using (a)-(d) gives \eqref{eq:submodularity}.

\subsection{Proof of theorem~\ref{th:submodularity:dir}(a)}\label{ref:proof:parta}

We say that the triplet $(\bx,\by,d)$ is {\em valid} if $\bx,\by\in \D$ and $d\in [0,\rho(\bx,\by)]$.
We denote $\bz=\bx\uparrow^d \by$; we have $x_i\le 0\le z_i\le y_i$.
Let us introduce a partial order $\preceq$ over valid triplets as the lexicographical order with variables
$(y_1-x_1,\ldots,y_n-x_n,-d)$. Note, the last component $-d$ is the least significant.
We use induction on this partial order. 
The induction base is trivial: if the first $n$ components are zeros then $\bx=\by$ so~\eqref{eq:submodularity:dir} is an equality,
and if the last component is minimal (i.e.\ $d=\rho(\bx,\by)$) then $\bx\uparrow^d\by=\by$ and $\bx\downarrow_d\by=\bx$,
so~\eqref{eq:submodularity:dir} is again an equality. Suppose that $\bx\ne \by$ and $d\le \rho(\bx,\by)-1$. 

Consider integer $d'\ge d$, and denote $\by'=\bx\uparrow^{d'+1}\by$ and $\delta_i=y'_i-z_i\ge 0$ for $i\in V$.
Suppose that $\delta_i \in \{0,1\}$ for all nodes $i\in V$. (This holds, for example, if $d'=d$.)
Denote $\bx'=\bx \downarrow_d \by'$, then $x'_i=x_i+y'_i- (x_i \uparrow^d y'_i)=x_i+y'_i-z_i=x_i+\delta_i$.
We claim that
\begin{equation}
\begin{tabular}{ll}
(a)$\quad \bx \uparrow^d \by' = \bx \uparrow^d \by \qquad \quad  $ &
(b)$\quad \bx \downarrow_d \by' = \bx' $ \\
(c)$\quad \bx' \uparrow^{d'} \by =  \by' $ &
(d)$\quad \bx' \downarrow_{d'} \by = \bx \downarrow_d \by$
\end{tabular}
\label{eq:dir:proof}
\end{equation}
In order to prove it, let us consider node $i$. Property (a) follows from the fact that $y'_i \ge x_i\uparrow^d y_i$.
Property (b) is the definition of $\bx'$. To prove (c), consider two possible cases:
\begin{itemize}
\item $\delta_i=0$, so $x'_i=x_i$ and $y'_i \equiv x_i \uparrow^{d'+1} y_i = x_i\uparrow^d y_i$. The latter condition and the fact
$d'+1>d$ imply that $x_i+d\ge y_i$, therefore $x'_i+d' \ge y_i = y'_i$. This leads to (c).
\item $\delta_i=1$. If $y'_i=y_i$ then condition (c) is straightforward (it follows from $x'_i\uparrow^{d'} y_i\ge y'_i$). 
Suppose that $y'_i < y_i$, then from definition of $y'_i$ we have $x_i+d'+1\le y'_i$, or $x'_i+d'\le y'_i$. This leads to (c).
\end{itemize}
Finally, properties (c) and (d) are equivalent since
\ifICALP
\begin{eqnarray*}
x'_i + y_i - y'_i - [x_i \downarrow_d y_i] 
& = & [x_i + y'_i - (x_i\uparrow^d y'_i)] + y_i - y'_i - [x_i + y_i - (x_i \uparrow^d y_i)] \\
& = & (x_i\uparrow^d y_i) - (x_i\uparrow^d y'_i)
= 0
\end{eqnarray*}
\else
$$
x'_i + y_i - y'_i - [x_i \downarrow_d y_i] 
= [x_i + y'_i - (x_i\uparrow^d y'_i)] + y_i - y'_i - [x_i + y_i - (x_i \uparrow^d y_i)] 
= (x_i\uparrow^d y_i) - (x_i\uparrow^d y'_i)
= 0
$$
\fi

Now suppose that in addition to conditions $\delta_i\in\{0,1\}$ there holds $\bx'\ne \bx$ and $\by'\ne \by$. 
Then we have $(\bx,\by',d)\prec (\bx,\by,d)$ and $(\bx',\by,d')\prec (\bx,\by,d)$, so by the induction hypothesis
\begin{eqnarray*}
f(\bx)\!+\!f(\by') \ge  f(\bx \uparrow^d \by')\!+\!f(\bx \downarrow_d \by') & \quad &
f(\bx')\!+\!f(\by) \ge  f(\bx' \uparrow^{d'} \by)\!+\!f(\bx' \downarrow_{d'} \by)
\end{eqnarray*}
Summing these inequalities and subtracting $f(\bx')+f(\by')$ from both sides using (a)-(d) gives \eqref{eq:submodularity:dir}.

Let us describe cases when the argument above can be applied; such cases can be eliminated from consideration. First, suppose that $y_j-z_j\ge 2$ for some node $j\in V$,
then there exists $d'\ge d$ such that the labeling $\by'=\bx\uparrow^{d'+1}\by$ has at least one node $j\in V$
with $y'_j\in[z_j+1,y_j-1]$. Let us choose the minimum integer $d'$ that has this property. Then $\delta_i\in\{0,1\}$ for all
nodes $i\in V$, since $\delta_i\ge 2$ would contradict to the minimality of chosen $d'$. We also have $y'_j\ne y_j$ and 
$x'_j\ne x_j$ (since $x'_j-x_j=\delta_j=1$), so the conditions above are satisfied. Therefore, from now on we assume without loss of generality
that $y_i-z_i\in\{0,1\}$ for all nodes $i\in V$.

We can also take $d'=d$. Condition $\delta_i\in\{0,1\}$ is then satisfied for all nodes. Therefore, we can assume without loss of generality 
that either $\bx'=\bx$ or $\by'=\by$ where $\by'=\bx\uparrow^{d+1}\by$, $\bx'= \bx\downarrow_d\by'$, otherwise the induction argument above could be applied.
Suppose that $\bx'=\bx$. This is equivalent to $\bx\uparrow^d \by'=\by'$,
or to the following condition for all nodes $i\in V$: either $x_i+d<0$ or $y_i-x_i\ge d$. It can be checked that
$\bx \uparrow^{d+1}\by= \bx \uparrow^{d}\by$ and $\bx \downarrow_{d+1}\by = \bx \downarrow_{d}\by$. Furthermore,
$(\bx,\by,d+1)\prec(\bx,\by,d)$, so~\eqref{eq:submodularity:dir} follows by the induction hypothesis. We thus assume from now on
that $\by'=\by$.

Equations below summarize definitions and assumptions made so far:
\begin{subequations}
\begin{eqnarray}
z_i &=& x_i \uparrow^d y_i \\
y'_i &=& x_i \uparrow^{d+1} y_i = y_i \\
x'_i & = & x_i \downarrow_d y_i \\
\delta_i & = & y_i - z_i = x'_i - x_i \in \{0,1\}
\end{eqnarray}

Let $S$ be the set of nodes $i\in V$ with $\delta_i=1$. It is straighforward to check that 
\begin{eqnarray}
i \in S\hspace{22pt}    & \Rightarrow & x_i + d = z_i = y_i-1 \\
i \in V-S & \quad\Rightarrow\quad & x_i \uparrow^d y_i = y_i \mbox{~~~and~~~} x_i\downarrow_d y_i=x_i=x'_i
\end{eqnarray}
\end{subequations}

If $S$ is empty then $\bx\uparrow^d \by=\by$, $\bx\downarrow_d \by=\bx$, so inequality~\eqref{eq:submodularity:dir}
is trivial. Thus, we can assume that $S$ is non-empty.
Suppose that $S$ contains two distinct nodes $i$ and $j$. Let us modify labelings $\bx'$ and $\by'$ as follows:
for node $j$ set $x'_j=x_j$, $y'_j=z_j$. It is straightforward to check that conditions~\eqref{eq:dir:proof} for $d'=d$
still hold. Furthermore, $x'_i>x_i$, $y'_j<y_j$, so $(\bx,\by',d)\prec (\bx,\by,d)$ and $(\bx',\by,d)\prec (\bx,\by,d)$.
Applying the argument described above gives~\eqref{eq:submodularity:dir}.

We are left with the case when $S$ contains a single node $j$. 
We will consider 5 possible subcases.
In 4 of them, we will do the following: (i) specify new labelings $\bx'$ and $\by'$ with $x'_i,y'_i\in[x_i,y_i]$ for each node $i$;
(ii) specify four identities involving $\bx,\by,\bx',\by'$ such that the right-hand sides contain expressions $\bx',\by',\bx\uparrow^d \by,\bx\downarrow_d\by$,
 and the left-hand sides contain expressions of the form 
$\bx \diamond_1 \by'$, $\bx ~\overline\diamond_1~ \by'$, 
$\bx' \diamond_2 \by$, $\bx' ~\overline\diamond_2~ \by$ 
where $\diamond_k$ is one of the operations $\sqcap,\sqcup,\uparrow^{d_k},\downarrow_{d_k}$
and $\overline\diamond_k$ is the corresponding ``symmetric'' operation. This will describe how to prove~\eqref{eq:submodularity:dir}:
we would need to sum two inequalities
$$
f(\bx)+f(\by')\ge f(\bx \diamond_1 \by')+f(\bx ~\overline\diamond_1~ \by') \qquad
f(\bx')+f(\by)\ge f(\bx \diamond_2 \by')+f(\bx ~\overline\diamond_2~ \by')
$$
that hold either by strong tree-submodularity or by the induction hypothesis, then use provided identities to prove~\eqref{eq:submodularity:dir}.
Checking the identities and the applicability of the induction hypothesis in the case of operations $\uparrow^{d_k}$, $\downarrow_{d_k}$
is mechanical, and we omit it.

\vspace{3pt}
\noindent \underline{\bf Case 1}~~$z_j=x_j+d\ge 1$ (implying $d \ge 1$). The identities are
\begin{equation}
\begin{tabular}{ll}
(a)$\quad \bx \uparrow^{d-1} \by' =  \bx' $ &
(b)$\quad \bx \downarrow_{d-1} \by' = \bx \downarrow_d \by$ \\
(c)$\quad \bx' \sqcup \by = \bx \uparrow^d \by \qquad \quad  $ &
(d)$\quad \bx' \sqcap \by = \by' $ 
\end{tabular}
\end{equation}
and labelings $\bx',\by'$ are defined as follows:
%
\begin{itemize}
\item if $i=j$ set $x'_j=y_j-2$, $y'_j=y_j-1$; 
\item otherwise if $x_i+d=y_i>0$ set $x'_i=y'_i=y_i-1$.
\item otherwise (if $y_i=0$ or $x_i+d>y_i$) set $x'_i=y'_i=y_i$.
\end{itemize}


The remainder is devoted to the case $z_j=x_j+d=0$. Note that we must have $y_j=1$.

\vspace{3pt}
\noindent \underline{\bf Case 2}~~$d\ge 1$, $z_j=x_j+d=0$ and there exists node $k\in V-\{j\}$ with $x_k=0$, $y_k>0$.  Then
\begin{equation}
\begin{tabular}{ll}
(a)$\quad \bx' \uparrow^{d} \by =  \bx \uparrow^d \by$ &
(b)$\quad \bx' \downarrow_{d} \by = \by' $ \\
(c)$\quad \bx \sqcup \by' =  \bx' $ &
(d)$\quad \bx \sqcap \by' = \bx \downarrow_d \by \qquad \quad  $ 
\end{tabular}
\end{equation}
$\bx'$, $\by'$ are defined as follows: 
\begin{itemize}
\item if $i=j$ set $x'_j=x_j$, $y'_j=x_j+1$; 
\item otherwise if $i=k$ set $x'_k=y'_k=x_k+1=1$;
\item otherwise set $x'_i=y'_i=x_i$.
\end{itemize}

\vspace{3pt}
\noindent \underline{\bf Case 3}~~$d\ge 1$, $z_j=x_j+d=0$ and there is no node $k\in V-\{j\}$ with $x_k=0$, $y_k>0$. 
The identities are
\begin{equation}
\begin{tabular}{ll}
(a)$\quad \bx' \uparrow^{d-1} \by =  \bx \uparrow^d \by$ &
(b)$\quad \bx' \downarrow_{d-1} \by = \by' $ \\
(c)$\quad \bx \sqcup \by' =  \bx \downarrow_d \by \qquad \quad$ &
(d)$\quad \bx \sqcap \by' =  \bx'  $ 
\label{eq:GHAOUHAD}
\end{tabular}
\end{equation}
$\bx'$, $\by'$ are defined as follows: 
\begin{itemize}
\item if $i=j$ set $x'_j=x_i+1$, $y'_j=x_i+2$; 
\item otherwise if $x_i<0$ set $x'_i=y'_i=x_i+1$;
\item otherwise (if $x_i=y_i=0$) set $x'_i=y'_i=0$.
\end{itemize}
Note, to verify identities~\eqref{eq:GHAOUHAD} for node $j$, one should consider cases $d=1$ and $d\ge 2$ separately.

\vspace{3pt}
\noindent \underline{\bf Case 4}~~$d=0$ (implying $x_j=0$, $y_j=1$) and there exists node $k\in V-\{j\}$
with $x_k<0$. Then
\begin{equation}
\begin{tabular}{ll}
(a)$\quad \bx \uparrow^0 \by' =  \bx' $ &
(b)$\quad \bx \downarrow_0 \by' = \bx \downarrow_0 \by \qquad \quad  $ \\
(c)$\quad \bx' \sqcup \by =  \by'$ &
(d)$\quad \bx' \sqcap \by = \bx\uparrow^0 \by $ 
\end{tabular}
\end{equation}
$\bx'$, $\by'$ are defined as follows: 
\begin{itemize}
\item if $i=j$ set $x'_j=0$, $y'_j=1$; 
\item otherwise if $x_i<0$, $y_i=0$ set $x'_i=y'_i=-1$;
\item otherwise (if $x_i=y_i=0$) set $x'_i=y'_i=0$.
\end{itemize}

\vspace{3pt}
\noindent \underline{\bf Case 5}~~$d=0$ (implying $x_j=0$, $y_j=1$) and there is no node $k\in V-\{j\}$ with $x_k<0$. 
Thus, $x_i=y_i=0$ for all $i\in V-\{j\}$. There holds $\bx\uparrow^0\by=\bx$, $\bx\downarrow_0\by=\by$,
so inequality~\eqref{eq:submodularity:dir} is trivial.




\section{Weakly tree-submodular functions}\label{sec:weak}
In this section we consider functions $f$ that satisfy condition~\eqref{eq:submodularity:weak},
but not necessarily condition~\eqref{eq:submodularity}.
It is well-known~\cite{Topkis:78,Murota:book} that such functions can be minimized efficiently 
if all trees $T_i$ are chains rooted at an endpoint and $\max_i |D_i|$ is polynomially bounded.
The algorithm utilizes Birkhoff's representation theorem~\cite{Birkhoff:37} which says that there exists a {\em ring family} $\calR$
such that there is an isomorphism between sets $\D$ and $\calR$ that preserves operations $\wedge$ and $\vee$.
(A subset $\calR\subseteq\{0,1\}^m$ is a ring family if it is closed under operations $\wedge$ and $\vee$.)
It is known that submodular functions over a ring family can be minimized in polynomial time, which implies the result.
Note that the number of variables will be $m=O(\sum_{i}|D_i|)$.

Another case when $f$ satisfying~\eqref{eq:submodularity:weak} can be minimized efficiently is when $f$ is bisubmodular, i.e.\ all trees
are as shown in Figure~\ref{fig:trees}(c). Indeed, in this case the pairs of operations $\langle\sqcap,\sqcup\rangle$ and $\langle\wedge,\vee\rangle$ coincide.

An interesting question is whether there exist other classes of weakly tree-submodular functions that
can be minimized efficiently. In this section we provide one rather special example. We consider the tree shown in
Figure~\ref{fig:trees}(d). Each $T_i$ has nodes $\{0,1,\ldots,K,K_{-1},K_{+1}\}$ such that $0$ is the root, the parent of
$k$ for $k=1,\ldots,K$ is $k-1$, and the parent of $K_{-1}$ and $K_{+1}$ is $K$. 

In order to minimize function $f$ for such choice of trees, we create $K+1$ variables $y_{i0},y_{i1},\ldots,y_{iK}$ for
each original variable $x_i\in D_i$. The domains of these variables are as follows:
$\tilde D_{i0}=\ldots=\tilde D_{iK-1}=\{0,1\}$, $\tilde D_{iK}=\{-1,0,+1\}$. Each domain 
is treated as a tree with root $0$ and other nodes being the children of $0$; this defines operations $\wedge$ and $\vee$
for domains $\tilde D_{i0},\ldots \tilde D_{iK-1}, \tilde D_{iK}$. The domain $\tilde \D$ is set as the Cartesian product of individual
domains over all nodes $i\in V$. Note, a vector $\by\in \tilde \D$ has $n(K+1)$ components.

For a labeling $\bx\in \D$ let us define labeling $\by=\psi(\bx)\in\tilde \D$ as follows:
\begin{eqnarray*}
x_i=k\in\{0,1,\ldots,K\} & \quad \Rightarrow \quad & y_{i0}=\ldots=y_{ik-1}=1 , \; y_{ik}=\ldots=y_{iK}=0 \\
x_i=K_{-1} \hspace{60pt} & \quad \Rightarrow \quad & y_{i0}=\ldots=y_{iK-1}=1 , \; y_{iK}=-1 \\
x_i=K_{+1}\hspace{60pt} & \quad \Rightarrow \quad & y_{i0}=\ldots=y_{iK-1}=1 , \; y_{iK}=+1 
\end{eqnarray*}
%
%
\ignore{
\begin{equation*}
\begin{tabular}{c|cccccc}
$x_i$ & $y_{i1}$ & $y_{i2}$ & $\ldots$ & $y_{iK-1}$ & $y_{iK}$ & $y_i$ \\
\hline
$0$ & $0$ & $0$ & $\ldots$ & $0$ & $0$ & $0$ \\
$1$ & $1$ & $0$ & $\ldots$ & $0$ & $0$ & $0$ \\
$2$ & $1$ & $1$ & $\ldots$ & $0$ & $0$ & $0$ \\
$\ldots$ & $\ldots$ & $\ldots$ & $\ldots$ & $\ldots$ & $\ldots$ & $\ldots$ \\
$K-1$ & $1$ & $1$ & $\ldots$ & $1$ & $0$ & $0$ \\
$K$ & $1$ & $1$ & $\ldots$ & $1$ & $1$ & $0$ \\
$K_{-1}$ & $1$ & $1$ & $\ldots$ & $1$ & $1$ & $-1$ \\
$K_{+1}$ & $1$ & $1$ & $\ldots$ & $1$ & $1$ & $+1$ \\
\end{tabular}
\end{equation*}
}
It is easy to check that mapping $\psi:\D\rightarrow \tilde \D$ is injective and preserves operations $\wedge$ and $\vee$.
Therefore, $\calR={\tt Im}~ \psi$ is a {\em signed ring family}, i.e.\ a subset of $\tilde \D$ closed under operations $\wedge$ and $\vee$.
It is known~\cite{McCormick:10} that bisubmodular functions over ring families can be minimized in polynomial time, leading to
\begin{proposition}
Functions that are weakly tree-submodular with respect to trees shown in Figure~\ref{fig:trees}(d) can be minimized
in time polynomial in $n$ and $\max_i |D_i|$.
\end{proposition}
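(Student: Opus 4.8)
The plan is to reduce the minimization of a weakly tree-submodular $f$ on the trees of Figure~\ref{fig:trees}(d) to the minimization of a bisubmodular function over a signed ring family, for which the algorithm of McCormick and Fujishige~\cite{McCormick:10} runs in polynomial time. The reduction is the ``thermometer'' encoding $\psi:\D\rightarrow\tilde\D$ written out above: the coordinates $y_{i0},\ldots,y_{i,K-1}\in\{0,1\}$ record how far up the chain $0,1,\ldots,K$ the label $x_i$ lies, while the final coordinate $y_{iK}\in\{-1,0,+1\}$ records which of the two leaf branches $K_{-1},K_{+1}$ (if any) has been entered. Treating each $\{0,1\}$ block and the single $\{-1,0,+1\}$ block as a star rooted at $0$ makes their $\wedge,\vee$ coincide with ordinary $\min/\max$ and with the bisubmodular operations respectively, so that $\tilde\D$ is a product domain of the bisubmodular type.

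First I would check that $\psi$ is injective, which is immediate from the encoding: the binary coordinates pin down the position on the chain, and $y_{iK}$ then distinguishes the two leaves. The crux is to show that $\psi$ is a homomorphism for both $\wedge$ and $\vee$, that is $\psi(a\wedge b)=\psi(a)\wedge\psi(b)$ and $\psi(a\vee b)=\psi(a)\vee\psi(b)$ with the right-hand operations taken componentwise in $\tilde\D$. Since $\psi$ acts independently on each original variable, it suffices to verify this one variable at a time, over $D_i$.

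The main obstacle is exactly this homomorphism check, and I would organize it by the position of $a,b$ relative to the fork at $K$. On the chain $\{0,1,\ldots,K\}$ the labels are totally ordered and the last coordinate stays $0$, so the statement collapses to the classical fact that the thermometer encoding of a chain preserves $\min$ and $\max$ (the same Birkhoff-type observation already used for Figure~\ref{fig:trees}(b)). The only incomparable pair in the whole tree is $\{K_{-1},K_{+1}\}$; every other pair has one label an ancestor of the other, so $a\wedge b$ and $a\vee b$ are just the smaller and larger of the two and the check is routine. The genuinely new subcase is $\{a,b\}=\{K_{-1},K_{+1}\}$: here the first $K$ coordinates of $\psi(a),\psi(b)$ are all $1$ and $\wedge,\vee$ act trivially on them, so everything is decided by the last coordinate, where one confirms that $K_{-1}\wedge K_{+1}=K=K_{-1}\vee K_{+1}$ matches $(-1)\wedge(+1)=0=(-1)\vee(+1)$ on the star $\{-1,0,+1\}$. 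A short finite case analysis disposes of the remaining mixed chain/leaf subcases.

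Once $\psi$ is an injective homomorphism for $\wedge$ and $\vee$, the image $\calR={\tt Im}\,\psi$ is automatically closed under the componentwise operations, since $\psi(a)\wedge\psi(b)=\psi(a\wedge b)\in\calR$ and likewise for $\vee$; hence $\calR$ is a signed ring family. The push-forward $\tilde f=f\circ\psi^{-1}$ is then bisubmodular on $\calR$ by weak tree-submodularity of $f$, so I would run the McCormick--Fujishige algorithm on $\tilde f$ over $\calR$, answering each value query through $\psi^{-1}$, and pull the resulting minimizer back by $\psi^{-1}$ to minimize $f$. As $\tilde f$ has $n(K+1)$ variables with $K=O(\max_i|D_i|)$, the total running time is polynomial in $n$ and $\max_i|D_i|$, giving the claim.
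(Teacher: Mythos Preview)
Your proposal is correct and follows exactly the paper's approach: the thermometer encoding $\psi$, the verification that $\psi$ is an injective $\langle\wedge,\vee\rangle$-homomorphism (which the paper merely asserts as ``easy to check'' while you spell out the case split), the conclusion that $\calR={\tt Im}\,\psi$ is a signed ring family, and the appeal to McCormick--Fujishige. There is no substantive difference.
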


\ignore{
\subsection{TODO}

\begin{definition} (a) A {\em multimorphism class} is a set ${\bf M}$ containing tuples $(D,\sqcap,\sqcup)$ where $D$ is a finite set
and $\sqcap,\sqcup$ are binary operations $D\times D\rightarrow D$. (b) Function $f:\D\rightarrow\mathbb R$ is called ${\bf M}$-submodular
if $\D=D_1\times\ldots\times\D_m$ where $(D_1,\sqcap_1,\sqcup_1),\ldots,(D_n,\sqcap_n,\sqcup_n)\in{\bf M}$, and
\begin{equation}
f(\bx)+f(\by)\ge f(\bx \sqcap \by) + f(\bx \sqcup \by) \qquad \forall \bx,\by\in \D
\label{eq:submodularity:M}
\end{equation}
where operations $\sqcap,\sqcup:\D\times\D\rightarrow\D$ are defined component-wise via operations $\sqcap_i,\sqcup_i:D_i\times D_i\rightarrow D_i$.
(c) A finite multimorphism class ${\bf M}$ is called {\em oracle-tractable} if any ${\bf M}$-submodular function $f$ can be minimized in time polynomial in $n$,
assuming that $f$ is given by an oracle and operations $\sqcap_i,\sqcup_i$ for $i=1,\ldots,n$ are given together with $f$ as input.
\end{definition}

A {\em congruence} on a tuple $(D,\sqcap,\sqcup)$ is an equivalence relation $\sim$ on $D$
such that conditions $a\sim b$, $c\sim d$ imply that $(a\sqcap c) \sim (b\sqcap d)$ and $(a\sqcup c) \sim (b\sqcup d)$.
Given an equivalence relation $\sim$ and an element $a\in D$, we denote $\widetilde a=\{b\in D\:|\:a\sim b\}$ to be the $\sim$-class of $a$.
The set of $\sim$-classes forms a {\em factor-multimorphism} which we denote by $(\widetilde D,\widetilde\sqcap,\widetilde\sqcup)$.

\begin{definition}
If ${\bf A}$ and ${\bf B}$ are multimorphism classes then their {\em Malt'sev product}, denoted ${\bf A}\circ{\bf B}$,
is the multimorphism class consisting of all tuples $(D,\sqcap,\sqcup)$ such that there is a congruence $\sim$ on $(D,\sqcap,\sqcup)$ 
with the following properties:
\begin{itemize}
\item Multimorphism $(\widetilde D,\widetilde\sqcap,\widetilde\sqcup)$ belongs to ${\bf B}$.
\item For every $\sim$-class $X\in \widetilde D$, multimorphism $(X,\sqcap,\sqcup)$ belongs to ${\bf A}$. Here $\sqcap,\sqcup$ denote restrictions of the respective
operations to $X\subseteq D$.
\end{itemize}
\end{definition}

\begin{theorem}[\cite{KrokhinLarose:08}] If finite multimorphism classes ${\bf A}$ and ${\bf B}$ are oracle-tractable then so is ${\bf A}\circ{\bf B}$.
\end{theorem}
\begin{proof}

Suppose that we are given an $({\bf A}\circ{\bf B})$-submodular function $f:\D\rightarrow\mathbb R$ where
$\D=D_1\times\ldots\times D_n$. For each component $i=1,\ldots,n$, consider the appropriate congruence $\sim$ on $(D_i,\sqcap_i,\sqcup_i)$.
(We may have different congruences $\sim$ for different compoments, but for simplicity of notation we use the same symbol $\sim$). Define
a function $\widetilde f:\widetilde\D\rightarrow\mathbb R$ where $\widetilde D=\widetilde D_1\times\ldots\times \widetilde D_n$ as follows:
$$
\widetilde f(\bX)=\min \; \{ f(\bx) \:|\: \bx \in \bX \}\qquad \forall \bX=(X_1,\ldots,X_n)\in \widetilde D_1\times\ldots\times \widetilde D_n
$$
where condition $\bx \in \bX$ means that $x_i\in X_i$ for all $i=1,\ldots,n$. Let us show that function $\widetilde f$ is ${\bf B}$-submodular.
For a labeling $\bx\in\D$ denote $\widetilde\bx=(\widetilde x_1,\ldots,\widetilde x_n)\in\widetilde\D$.
Consider labelings $\bX,\bY\in\widetilde D$. There exist labelings $\bx,\by\in \D$ such that
$\bX=\widetilde\bx$, $\bY=\widetilde\by$
and $\widetilde f(\bX)=f(\bx)$, $\widetilde f(\bY)=f(\by)$, so we can write
\begin{eqnarray*}
\widetilde f(\bX)+\widetilde f(\bY)  & = & f(\bx) +  f(\by) \\
                                     & \ge & f(\bx\sqcap\by) + f(\bx\sqcup\by) \\
& \ge & \widetilde f(\widetilde{\bx \sqcap  \by}) + \widetilde f(\widetilde{\bx \sqcup  \by}) \\
& = & \widetilde f(\widetilde\bx \sqcap  \widetilde\by) + \widetilde f(\widetilde\bx \sqcup  \widetilde\by) = 
\widetilde f(\bX \sqcap  \bY) + \widetilde f(\bX \sqcup  \bY)
\end{eqnarray*}

By assumption, there exist fixed polynomials $p_1(n)$, $q_1(n)$ such
that any ${\bf A}$-submodular function can be minimized in time $p_1(n)+q_1(n)\cdot EO_1$
where $EO_1$ is the time for function evaluation. Similarly, any ${\bf B}$-submodular function can be minimized in time $p_2(n)+q_2(n)\cdot EO_2$.
It is now easy to see that function $\widetilde f$ can be minimized in time $p_2(n)+q_2(n)\cdot[p_1(n)+q_1(n)\cdot EO_1]$:
we need to invoke the algorithm for minimizing ${\bf B}$-submodular functions, and for every call to the evaluation oracle
for $\widetilde f$ invoke the algorithm for minimizing ${\bf A}$-submodular functions. After obtaining a minimizer of $\widetilde f$,
one more call to the ${\bf A}$-submodular minimization algorithm gives a minimizer of $f$.
\end{proof}

\begin{theorem}
Let ${\bf M}$ be a finite multimorphism class containing tuples $(D,\wedge,\vee)$ such
that $D$ corresponds to a rooted binary tree and operations $\wedge,\vee$ are as defined in section~\ref{sec:intro}.
Then ${\bf M}$ is oracle-tractable.
\end{theorem}
\begin{proof}
Consider tuple $(D,\wedge,\vee)\in {\bf M}$ corresponding to the rooted tree $T$.
We denote $d(D,\wedge,\vee)$ to be the maximum number of nodes with two children
\end{proof}
}

\section{Conclusions and discussion}
We introduced two classes of functions (strongly tree-submodular and weakly tree-submodular)
that generalize several previously studied classes. For each class, we gave new examples of trees
for which the minimization problem is tractable.

Our work leaves a natural open question: what is the complexity of the problem for more general trees?
In particular, can we minimize efficiently strongly tree-submodular functions if trees are non-binary,
i.e.\ if some nodes have three or more children? Note that the algorithm in section~\ref{sec:alg}
and its analysis are still valid, but it is not clear whether the minimization procedure in step {\tt S2}
can be implemented efficiently. Also, are there trees besides the one shown in Figure~\ref{fig:trees}(d)
for which weakly tree-submodular functions can be minimized efficiently?

More generally, can one characterize for which operations $\langle\sqcap,\sqcup\rangle$ the minimization problem
is tractable? Currently known tractable examples are distributive lattices, some non-distributive lattices~\cite{KrokhinLarose:08,Kuivinen:TR},
operations on trees introduced in this paper, and combinations of the above operations obtained via Cartesian product and Malt'sev product~\cite{KrokhinLarose:08}.
Are there tractable cases that cannot be obtained via lattice and tree-based operations?

\small
\bibliographystyle{plain}

\end{document}